\theoremstyle{definition}
\newtheorem{definition}{Definition}
\newtheorem{theorem}{Theorem}
\newtheorem{proposition}[theorem]{Proposition}
\newtheorem{lemma}[theorem]{Lemma}
\newtheorem{corollary}[theorem]{Corollary}
\newtheorem{remark}{Remark}
\newtheorem{example}{Example}
\newtheorem{construction}{Construction}
\newtheorem{conjecture}{Conjecture}
\newcolumntype{C}[1]{>{\centering\arraybackslash}m{#1}}
\newcommand\myshade{70}
\algnewcommand\algorithmicinput{\textbf{Input:}}
\algnewcommand\INPUT{\item[\algorithmicinput]}
\algnewcommand\algorithmicoutput{\textbf{Output:}}
\algnewcommand\OUTPUT{\item[\algorithmicoutput]}
\newcommand{\hanmao}[1]{{\footnotesize [\gcomment{#1}\;\;\rcomment{--HM}]}}
\newcommand{\wenqin}[1]{{\footnotesize [\textcolor{cyan!70!black}{#1} \textcolor{blue!80!black}{--WQ}]}}
\newcommand{\acomment}[2]{{\color{#1}#2}}
\newcommand{\rcomment}[1]{\acomment{red}{#1}}
\newcommand{\gcomment}[1]{\acomment{OliveGreen}{#1}}
\newcommand\nc\newcommand
\nc{\vzero}{{\boldsymbol{0}}}
\nc{\vone}{{\boldsymbol{1}}}
\nc{\bfC}{{\boldsymbol C}}
\nc{\Read}{{\sf R}}
\nc{\cost}{{\sf cost}}
\nc{\bfa}{{\boldsymbol a}}  \nc{\bfb}{{\boldsymbol b}}  \nc{\bfd}{{\boldsymbol d}}
\nc{\bfe}{{\boldsymbol e}}  \nc{\bff}{{\boldsymbol f}}  \nc{\bfg}{{\boldsymbol g}}
\nc{\bfh}{{\boldsymbol h}}  \nc{\bfi}{{\boldsymbol i}}  \nc{\bfj}{{\boldsymbol j}}
\nc{\bfk}{{\boldsymbol k}}  \nc{\bfl}{{\boldsymbol l}} 
 \nc{\bfm}{{\boldsymbol m}}
\nc{\bfn}{{\boldsymbol n}}  \nc{\bfo}{{\boldsymbol o}}  \nc{\bfp}{{\boldsymbol p}}
\nc{\bfq}{{\boldsymbol q}}  \nc{\bfr}{{\boldsymbol r}}  \nc{\bfs}{{\boldsymbol s}}
\nc{\bft}{{\boldsymbol t}}  \nc{\bfu}{{\boldsymbol u}}  \nc{\bfv}{{\boldsymbol v}}
\nc{\bfw}{{\boldsymbol w}}  \nc{\bfx}{{\boldsymbol x}}  \nc{\bfy}{{\boldsymbol y}}
\nc{\bfz}{{\boldsymbol z}}
\nc\bfA{{\boldsymbol A}}  \nc\bfB{{\boldsymbol B}}  \nc\bfD{{\boldsymbol D}}
\nc\bfE{{\boldsymbol E}}  \nc\bfF{{\boldsymbol F}}  \nc\bfG{{\boldsymbol G}}
\nc\bfH{{\boldsymbol H}}  \nc\bfI{{\boldsymbol I}}  \nc\bfJ{{\boldsymbol J}}
\nc\bfK{{\boldsymbol K}}  \nc\bfL{{\boldsymbol L}}  \nc\bfM{{\boldsymbol M}}
\nc\bfN{{\boldsymbol N}}  \nc\bfO{{\boldsymbol O}}  \nc\bfP{{\boldsymbol P}}
\nc\bfQ{{\boldsymbol Q}}  \nc\bfR{{\boldsymbol R}}  \nc\bfS{{\boldsymbol S}}
\nc\bfT{{\boldsymbol T}}  \nc\bfU{{\boldsymbol U}}  \nc\bfV{{\boldsymbol V}}
\nc\bfW{{\boldsymbol W}}  \nc\bfX{{\boldsymbol X}}  \nc\bfY{{\boldsymbol Y}}
\nc\bfZ{{\boldsymbol Z}}
\nc{\bfc}{{\boldsymbol c}}
\nc\cA{\mathcal{A}}  \nc\cB{\mathcal{B}}  \nc\cC{\mathcal{C}}
\nc\cD{\mathcal{D}}  \nc\cE{\mathcal{E}}  \nc\cF{\mathcal{F}}
\nc\cG{\mathcal{G}}  \nc\cH{\mathcal{H}}  \nc\cI{\mathcal{I}}
\nc\cJ{\mathcal{J}}  \nc\cK{\mathcal{K}}  \nc\cL{\mathcal{L}}
\nc\cM{\mathcal{M}}  \nc\cN{\mathcal{N}}  \nc\cO{\mathcal{O}}
\nc\cP{\mathcal{P}}  \nc\cQ{\mathcal{Q}}  \nc\cR{\mathcal{R}}
\nc\cS{\mathcal{S}}  \nc\cT{\mathcal{T}}  \nc\cU{\mathcal{U}}
\nc\cV{\mathcal{V}}  \nc\cW{\mathcal{W}}  \nc\cX{\mathcal{X}}
\nc\cY{\mathcal{Y}}  \nc\cZ{\mathcal{Z}}
\nc\bbA{\mathbb{A}}  \nc\bbB{\mathbb{B}}  \nc\bbC{\mathbb{C}}
\nc\bbD{\mathbb{D}}  \nc\bbE{\mathbb{E}}  \nc\bbF{\mathbb{F}}
\nc\bbG{\mathbb{G}}  \nc\bbH{\mathbb{H}}  \nc\bbI{\mathbb{I}}
\nc\bbJ{\mathbb{J}}  \nc\bbK{\mathbb{K}}  \nc\bbL{\mathbb{L}}
\nc\bbM{\mathbb{M}}  \nc\bbN{\mathbb{N}}  \nc\bbO{\mathbb{O}}
\nc\bbP{\mathbb{P}}  \nc\bbQ{\mathbb{Q}}  \nc\bbR{\mathbb{R}}
\nc\bbS{\mathbb{S}}  \nc\bbT{\mathbb{T}}  \nc\bbU{\mathbb{U}}
\nc\bbV{\mathbb{V}}  \nc\bbW{\mathbb{W}}  \nc\bbX{\mathbb{X}}
\nc\bbY{\mathbb{Y}}  \nc\bbZ{\mathbb{Z}}
\newcommand{\bigboxplus}{\mathop{\scalebox{1.5}{$\boxplus$}}\limits}
\nc\norm[1]{\left\lVert#1\right\rVert}  % norm: ‖x‖
\nc\setof[1]{\left\{ #1 \right\}}  % set brackets: {x}
\nc\ceil[1]{\left\lceil#1\right\rceil}              % ⌈x⌉
\nc\floor[1]{\left\lfloor#1\right\rfloor}           % ⌊x⌋
\nc\entropy{{\sf H}}        % entropy
\nc\dist{{\rm d}}           % distance
\nc\wt{{\sf wt}}            % Hamming weight
\title{Zigzag Codes Revisited: From Optimal Rebuilding to Small Skip Cost and Small Fields}
\author{

  \IEEEauthorblockN{ Wenqin Zhang\IEEEauthorrefmark{1},
  Han Mao Kiah\IEEEauthorrefmark{2}, 
  Son Hoang Dau\IEEEauthorrefmark{3}
 }\\
  %{\color{red}CHANGE ORDER}
   \IEEEauthorblockA{
    \small \IEEEauthorrefmark{1}School of Cyber Science and Technology, Hubei University, Wuhan, China
    }\\
\IEEEauthorblockA{\small 
 \IEEEauthorrefmark{2}School of Physical and Mathematical Sciences, Nanyang Technological University, Singapore
    }\\
\IEEEauthorblockA{\small 
 \IEEEauthorrefmark{3}School of Computing Technologies, RMIT University, Melbourne, VIC, Australia
    }\\
  {\footnotesize wenqin\_zhang@hubu.edu.cn,
  hmkiah@ntu.edu.sg, sonhoang.dau@rmit.edu.au}
  \vspace{-5mm}
  \thanks{\IEEEauthorrefmark{5}Corresponding author: Wenqin Zhang.}
%   \thanks{  The work was supported in part by National Key $R\&D$ Program of China under Grant 2022YFA1005000,  National Natural Science Foundation of China under Grant 62171279, and Fundings of SJTU-Alibaba Joint Research Lab on Cooperative Intelligent Computing. The work of Son Hoang Dau was supported by ARC DECRA DE180100768. The work of Han Mao Kiah was supported by the Ministry of Education, Singapore, under its MOE AcRF Tier~2 Award under Grant MOE-T2EP20121-0007 and MOE AcRF Tier~1 Award under Grant RG19/23.
% }
  }
\date{}
\begin{document}

\begin{comment}
\section*{Custom New Commands}

\hanmao{Remember to exclude from submission!}
\begin{itemize}
  \item \verb|\norm{x}| → $ \norm{x} $ Delimiters will be adjusted in size!
  \item \verb|\setof{1,2,3}| → $ \setof{1,2,3} $ Delimiters will be adjusted in size!
  \item \verb|\ceil{x}| → $ \ceil{x} $, \verb|\floor{x}| → $ \floor{x} $ Delimiters will be adjusted in size!
  \item \verb|\bfv|, \verb|\bfA| → $ \bfv, \bfA $ Use for vectors and matrices! 
  \item \verb|\cC|, \verb|\cR| → $ \cC, \cR $
  \item \verb|\bbF|, \verb|\bbZ| → $ \bbF, \bbZ $
  \item \verb|\entropy| → $ \entropy $, \verb|\wt| → $ \wt $, \verb|\dist| → $ \dist $
  \item \verb|\hanmao{...}| → color-coded comment (for drafting). Add name and color accordingly!
    \item \verb|\wenqin{...}| [cyan!70!black. -WQ] → wenqin's comment

     \item \verb|\wqadd{...}|  Content written by Wenqin
     \item The superscript $i$ indicates the parity column and the superscript $j$ indicates the information column.
     \item \verb|\bigboxplus| → $\bigboxplus_{j\in[k]} a_j$
     \item parity check node: $p$

\end{itemize}

\newpage
\end{comment}

\maketitle
\begin{abstract}
We revisit zigzag array codes, a family of MDS codes known for achieving optimal access and optimal rebuilding ratio in single-node repair. 
In this work, we endow zigzag codes with two new properties: small field size and low skip cost. 

First, we prove that when the row-indexing group is $\cG=\bbZ_2^m$ and the field has characteristic two, explicit coefficients over any field with $|\bbF|\ge N$ guarantee the MDS property, thereby decoupling the dependence among $p$, $k$, and $M$. 
Second, we introduce an ordering-and-subgroup framework that yields repair-by-transfer schemes with bounded skip cost and low repair-fragmentation ratio (RFR), while preserving optimal access and optimal rebuilding ratio. 
Our explicit constructions include families with zero skip cost whose rates approach $2/3$, and families with bounded skip cost whose rates approach $3/4$ and $4/5$. 
These rates are comparable to those of MDS array codes widely deployed in practice. 
Together, these results demonstrate that zigzag codes can be made both more flexible in theory and more practical for modern distributed storage systems.
\end{abstract}

\section{Introduction}\label{sec:intro}

Large-scale distributed storage systems (DSS), such as the Google File System and the Hadoop Distributed File System, require strong fault tolerance to protect data against node failures. 
Among various redundancy schemes, \emph{maximum distance separable} (MDS) codes are widely used because they provide the maximum failure tolerance for a given storage overhead. 
An $(N,k)$ MDS code encodes $k$ data blocks into $N$ fragments so that the original data can be recovered from any $k$ of them. 
However, repairing a single failed node with an MDS code is inefficient: it still requires downloading the entire data from $k$ other nodes, leading to high \emph{repair bandwidth}, i.e., the total amount of data transferred from \emph{helper nodes} during repair.  
To address this inefficiency, Dimakis \textit{et al.}~\cite{dimakis2010network} proposed the study of repair bandwidth for array codes and established a fundamental lower bound. 
An $(N,k,M)$ \emph{MDS array code} is an $M \times N$ array over a finite field $\mathbb{F}_q$, where $M$ is the \emph{subpacketization level}, $N$ is the total number of nodes with $k$ systematic nodes and $p=N-k$ parity nodes, such that any $k$ out of $N$ nodes suffice to recover the file. 
A particularly important subclass is the \emph{minimum storage regenerating} (MSR) codes, which simultaneously satisfy the MDS property and meet this lower bound on repair bandwidth. 
Over the past decade, extensive research has focused on constructing MDS array codes with optimal repair bandwidth~\cite{wang2011codes,Tamo2013,papailiopoulos2013repair,sasidharan2015high,wang2016explicit,ye2017explicit,liu2018explicit,li2020systematic}.

While minimizing repair bandwidth is crucial, it does not fully capture the repair cost in practice. 
Disks in real-world storage systems have limited computational ability, so MDS array codes with \emph{uncoded repair schemes}, in which helper nodes transmit raw symbols without computation, are more desirable. 
When repairing a failed node, if the amount of data \emph{accessed} from the helper nodes also meets the cut-set bound, the code is said to have the \emph{optimal access property}~\cite{ye2017explicit,vajha2018clay}.

Even with optimal repair bandwidth and optimal access, practical systems may still suffer from significant latency. 
Specifically, \cite{wu2021} attributed access latency at helper nodes to two factors: data processing time and data access time. 
The former was addressed in earlier works where the concept of \emph{repair-by-transfer} was introduced~\cite{shah2012}. In the repair-by-transfer framework, a failed node is repaired by a simple transfer of stored symbols without any computation at the helpers%
\footnote{In~\cite{shah2012}, an even stricter model was considered, where the failed node itself performs no computation. In this work, we only assume that the helpers perform no computation.}.
The latter (data access time), however, depends heavily on how data is laid out on disk. Two strategies are particularly effective in reducing access latency:  
(i) designing codes with \emph{small subpacketization} levels to reduce fragmentation, and  
(ii) ensuring that the symbols read during repair are located in \emph{contiguous sections}, so that random I/O overhead is avoided. The non-contiguity of these reads is measured by the \emph{skip cost}.  
Fig.~\ref{fig:zigzag} shows the skip cost for two array codes. 
 In (a), we have five nodes $\bfa^{(0)}$, $\bfa^{(1)}$, $\bfa^{(2)}$, $\bfp^{(0)}$ and $\bfp^{(1)}$. 
When $\bfa^{(2)}$ fails, the contents highlighted in {\color{blue} blue} are read. Notably, in node $\bfa^{(0)}$, the read contents are not contiguous and we quantify the gap as skip cost one. 
Summing up the skip costs across all helper nodes, we say that the skip cost of repairing node $\bfa^{(2)}$  is four.
In contrast, (b) illustrates an example of our proposed code construction. In this case, the repair bandwidth remains the same, but we see that all read contents are contiguous. Consequently, the skip cost is zero.

These considerations motivate the need for additional performance metrics beyond bandwidth and access volume. 
In this work we introduce, and formalize later in Section~II-C, the \emph{repair-fragmentation ratio} (RFR): a normalized metric that jointly accounts for the total amount of data read and its contiguity across helpers. 
RFR enables a fair comparison between codes that achieve the same repair bandwidth but exhibit very different I/O behaviors in practice.

\begin{figure}[H]
    \footnotesize
    \begin{center}
        \noindent(a) $(5, 3, 4)$-array code with skip cost four.
        
        \vspace{1mm}
    %\footnotesize
    \setlength\tabcolsep{1.5pt} 
        \begin{tabular}{|c||c|c|c|c|c|}
        \hline
        & $\bfa^{(0)}$ & $\bfa^{(1)}$ & $\bfa^{(2)}$ & 
        $\bfp^{(0)}$ & $\bfp^{(1)}$ \\
        & $\spadesuit$  & $\heartsuit$ &  $\clubsuit$ &  
        & \\ \hline 
        $00$ & {\color{blue}$00_\spadesuit$} & {\color{blue}$00_\heartsuit$}  & {\color{red}$00_\clubsuit$} & 
        {\color{blue}$00_{ \spadesuit }\boxplus 00_{\heartsuit }\boxplus 00_{\clubsuit}$} & 
        {\color{blue}$00_{ \spadesuit }\boxplus 10_{\heartsuit }\boxplus 01_{\clubsuit}$} \\ 
        
        $01$ & $01_\spadesuit$ & $01_\heartsuit$  & {\color{red}$01_\clubsuit$} &
        $01_{ \spadesuit }\boxplus 01_{\heartsuit}\boxplus 01_{\clubsuit}$ &
        $01_{ \spadesuit }\boxplus 11_{\heartsuit}\boxplus 00_{\clubsuit}$ \\ 
        
        $10$ & {\color{blue}$10_\spadesuit$} & {\color{blue}$10_\heartsuit$}  & {\color{red}$10_\clubsuit$} & 
        {\color{blue}$10_{ \spadesuit }\boxplus 10_{\heartsuit }\boxplus 10_{\clubsuit}$} &
        {\color{blue}$10_{ \spadesuit }\boxplus 00_{\heartsuit }\boxplus 11_{\clubsuit}$} \\ 
        
        $11$ & $11_\spadesuit$ & $11_\heartsuit$  & {\color{red}$11_\clubsuit$} & 
        $11_{ \spadesuit }\boxplus 11_{\heartsuit }\boxplus 11_{\clubsuit}$ & 
        $11_{ \spadesuit }\boxplus 01_{\heartsuit }\boxplus 10_{\clubsuit}$ \\ \hline
        \end{tabular}
    \end{center}
    
        \vspace{2mm}
        \begin{center}
        \noindent(b) $(6,3,4)$-array code with skip cost zero.
        \vspace{1mm}
        
    {\centering \scriptsize
        \setlength\tabcolsep{1pt} 
        \begin{tabular}{|c||c|c|c|c|c|c|}
        \hline
        & $\bfa^{(0)}$ & $\bfa^{(1)}$ & $\bfa^{(2)}$ & 
          $\bfp^{(0)}$ & $\bfp^{(1)}$ & $\bfp^{(2)}$ \\
        & $\spadesuit$  & $\heartsuit$ &  $\clubsuit$ &  
      & &  \\ \hline 
        $00$ & {\color{blue}$00_\spadesuit$} & {\color{blue}$00_\heartsuit$}  & {\color{red}$00_\clubsuit$} & 
        {\color{blue}$00_{ \spadesuit }\boxplus 00_{\heartsuit }\boxplus 00_{\clubsuit}$} & 
        {\color{blue}$00_{ \spadesuit }\boxplus 01_{\heartsuit }\boxplus 10_{\clubsuit}$} &
        $00_{ \spadesuit }\boxplus 10_{\heartsuit }\boxplus 11_{\clubsuit}$  \\ 
        
        $01$ & {\color{blue}$01_\spadesuit$} & {\color{blue}$01_\heartsuit$}  & {\color{red}$01_\clubsuit$} &
        {\color{blue}$01_{ \spadesuit }\boxplus 01_{\heartsuit}\boxplus 01_{\clubsuit}$} &
        {\color{blue}$01_{ \spadesuit }\boxplus 00_{\heartsuit}\boxplus 11_{\clubsuit}$} &
        $01_{ \spadesuit }\boxplus 11_{\heartsuit }\boxplus 10_{\clubsuit}$  \\ 
        
        $10$ & $10_\spadesuit$ & $10_\heartsuit$  & {\color{red}$10_\clubsuit$} & 
        $10_{ \spadesuit }\boxplus 10_{\heartsuit }\boxplus 10_{\clubsuit}$ &
        $10_{ \spadesuit }\boxplus 11_{\heartsuit }\boxplus 00_{\clubsuit}$ &
        $10_{ \spadesuit }\boxplus 00_{\heartsuit }\boxplus 01_{\clubsuit}$  \\ 
        
        $11$ & $11_\spadesuit$ & $11_\heartsuit$  & {\color{red}$11_\clubsuit$} & 
        $11_{ \spadesuit }\boxplus 11_{\heartsuit }\boxplus 11_{\clubsuit}$ & 
        $11_{ \spadesuit }\boxplus 10_{\heartsuit }\boxplus 01_{\clubsuit}$ &
        $11_{ \spadesuit }\boxplus 01_{\heartsuit }\boxplus 00_{\clubsuit}$  \\  \hline
        \end{tabular}}
        \end{center}
        \caption{(a)~Example of a {$(5,3,4)$}-MDS array code constructed in~\cite{Tamo2013}. Suppose information node $\bfa^{(2)}$ (highlighted in {\color{red}red}) fails. We contact nodes $\bfa^{(0)}$, $\bfa^{(1)}$, $\bfp^{(0)}$ and $\bfp^{(1)}$ and read the contents in {\color{blue}blue}. Here, the skip cost is $4\times 1 = 4$. 
        (b)~Example of a {$(6,3,4)$}-MDS array code described in \cite{chee2024repairing } with $m=2$. Suppose information node $\bfa^{(2)}$ (highlighted in {\color{red}red}) fails. We contact nodes $\bfa^{(0)}$, $\bfa^{(1)}$, $\bfp^{(0)}$ and $\bfp^{(1)}$ and read the contents in {\color{blue}blue}. Here, the skip cost is zero.
        Note that we use $\bfx_{i}$ to represent the information symbol $a_\bfx^{(i)}$, while the `sum' $\bfx_i\boxplus \bfy_j\boxplus \bfz_k$ indicates that the corresponding codesymbol is a linear combination of $a_\bfx^{(i)}$, $a_\bfy^{(j)}$, and $a_\bfz^{(k)}$. %\hm{Use plus.}
        %\vspace{-5mm}
        }\label{fig:zigzag}
    \end{figure}

\subsection{Related works}

Many MSR codes with optimal repair bandwidth and optimal access property have been proposed in recent years \cite{Tamo2013,wang2016explicit,ye2017explicit,li2018generic,liu2018explicit,li2024msr}. 
When the number of helper nodes $d=N-1$, the {\em zigzag codes} with parameters $(k+p,k,M=p^{k-1})$ and $(k+p,k,M=p^{k+1})$ proposed in~\cite{Tamo2013,wang2016explicit} achieve optimal access and operate over the smallest known finite fields, namely $\mathbb{F}_3$ and $\mathbb{F}_4$ for $p=2$ and $p=3$, respectively. However, the field size required to construct zigzag codes for $p > 4$ remains unknown. 
Later, Ye et al.~\cite{ye2017explicit} constructed a $(k+p,k,M=p^{k+p-1})$ MSR code over a finite field of size $|\bbF|\geq k+p-1$.  
The generic transformation in~\cite{li2018generic} was applied to an MDS code to obtain a $ (k+p,k,M=p^{\lceil\frac{k+p}{p}\rceil})$ MSR code over a finite field of size $q\geq k+p$.
Later, Liu et al.~\cite{liu2018explicit} constructed optimal-access MSR codes  over small fields  under the condition $(q-1)\nmid(p+1)$, e.g., $\mathbb{F}_3$ for even $p$ and $\mathbb{F}_q$ with $q \ge p + 1$ for odd $p$. 
Li et al.~ \cite{li2024msr} recently proposed MSR codes with the smallest known subpacketization $M= s^\frac{N}{s}$ for $k+1\leq d\leq N-1$, over a field of size $q \ge Ns + (s - 1)2^{s - 2}$, where $s = d - k + 1$.
These constructions are subject to a known lower bound on the subpacketization. Specifically,
~\cite{tamo2014access} showed that any optimal-access MSR code with helper nodes $d=N-1$ must satisfy  $M\geq p^{\frac{k-1}{p}}$,  which  was later generalized to $M\geq s^\frac{N}{s}$  for  $s=d-k+1$ in~\cite{balaji2022lower}. 
These  result indicates that for MSR codes with optimal-access, the subpacketization level $M$ must grow rapidly with the number of information nodes $k$. Therefore, for a fixed code length $N$, it is impossible for MSR codes with optimal access to achieve both subpacketization and field size to be subexponential in $N$ simultaneously. 

In addition to reducing subpacketization, minimizing the skip cost (i.e., the number of non-contiguous disk sections accessed during repair) is crucial for improving I/O efficiency during degraded reads. As noted in~\cite{wu2021}, data access time is influenced not only by the amount of data read but also by the physical continuity of the accessed symbols on disk%
\footnote{In fact, this issue was raised during a Huawei Industry Session during ISIT 2023. Dr Wu discussed this in his presentation ``Coding Challenges for Future Storage''.}.
Codes with zero skip cost are also referred to as \emph{degraded-read-friendly}. Several recent works have contributed to the construction of such codes. When $p=2$ and $M=2$, Wu et al.~\cite{wu2021} derived a lower bound on the \emph{access bandwidth} for degraded-read-friendly MDS array codes, 
% with two parity-check nodes and with subpacketization level $2$, 
and proposed an explicit construction that achieves this bound.
Later, Liu and Zhang~\cite{liu2024family} proposed a family of $(k+p, k, M=p^2)$ degraded-read-friendly MDS array codes.
Independently, Chee et al.~\cite{chee2024repairing} (prior work of a subset of the current authors%
\footnote{This paper was presented in part at the 2024
International Symposium on Information Theory~\cite{chee2024repairing}}
) introduced the skip cost metric and constructed both fractional repetition codes and zigzag codes with a coding rate of $1/2$ that achieve zero skip cost. 
Most recently, Yu et al.~\cite{yu2025zero} demonstrated that zero skip cost is always attainable for fractional repetition codes using covering designs, and presented several constructions that satisfy this property. Thus, while degraded-read-friendly codes have been demonstrated in special cases, no general framework exists for achieving low skip cost in zigzag codes with flexible parameters.

\subsection{Our Contributions}

\begin{itemize}
\item Motivated by the difficulty of achieving small subpacketization and small field size simultaneously in MSR codes with optimal access, we focus on zigzag array codes under more relaxed conditions.
In this work, we revisit the classical zigzag construction and analyze its MDS property. By assigning Cauchy-type coefficients to the generator matrix, we show that zigzag codes constructed over $G = \mathbb{F}_2^m$ achieve the MDS property over a finite field $\bbF$  of characteristic two with size satisfying $|\mathbb{F}| \geq N$. Notably, our result allows the number of information nodes $k$ and the subpacketization $M$ to be chosen independently, thereby enabling the construction of MDS array codes that simultaneously achieve {\em both small subpacketization and small field size} -- a feat impossible under the MSR requirement.

\item Moreover, we propose explicit constructions of zigzag codes with {\em bounded skip cost} and, consequently, {\em low repair-fragmentation ratio (RFR)}, while preserving both optimal access and optimal rebuilding ratio for single-node repair. 
To this end, we revisit the zigzag framework and introduce algebraic conditions that enable precise skip cost analysis. 
Our constructions include a family of codes with zero skip cost whose rates approach $2/3$, as well as families with bounded skip cost whose rates approach $3/4$ and $4/5$. 
Importantly, the rates of these latter families are comparable to those of array codes deployed in practice (see Table~\ref{tab:parameters}).

\end{itemize}
The rest of this paper is organized as follows. Section~\ref{sec:preli} reviews concepts related to MDS array codes and formalizes repair metrics, including the new repair-fragmentation ratio (RFR). 
Section~\ref{sec:zigzag} presents the zigzag code framework and restates the parity construction and recovery conditions using a group-structure notation. 
Section~\ref{sec:MDSproperty} shows that zigzag codes can achieve the MDS property over small fields when the number of information nodes $k$ and the subpacketization level $M$ are chosen independently. Section~\ref{sec:skipcost} proposes an explicit construction of zigzag codes with zero skip cost. Finally, Section~\ref{sec:conclusion} summarizes our contributions.

\begin{comment}
\hanmao{Include here all works about zero-skip cost here.
- Wu: Subpacketization is two >> Zero skip cost (all nodes)
- LiuZhang:Subpacketization is $p^{p^2}$ where $p=N-k$ >> Zero Skip Cost (all nodes)
- YuYuanSchwartz:(all nodes)
- our ISIT:
Important: Mention that LiuZhang is independent work of ours.
}
\end{comment}

\begin{comment}
\item Moreover, we propose an explicit construction of $(N = k + \lceil \frac{k}{2} \rceil + 1,\, k,\, M = 2^{m})$ zigzag codes that achieve zero skip cost for the repair of any single systematic node failure.  By carefully designing structured permutation matrices for the placement of information and parity symbols, we ensure that the repair procedure accesses exactly $\frac{M}{2}$ rows from each of the remaining systematic nodes and two parity-check nodes.
This results in both the optimal rebuilding ratio and optimal access. Compared to the prior construction in~\cite{chee2024repairing}, which achieves zero skip cost at the expense of a $1/2$ coding rate, our proposed codes attain a higher coding rate of $k / N$, which  approaches $2/3$ asymptotically as $k \to \infty$.
\end{comment}
\section{Preliminaries}\label{sec:preli}

Throughout this paper, 
let $[i,j]$ denote the set $\{i, i+1, \ldots, j\}$ for integers $i < j$, and let $[i]$ denote the set $\{1, 2, \ldots, i\}$. Furthermore, $\bbF_q$ denotes the finite field with $q$ elements.
Furthermore, matrices and vectors are highlighted in boldface with the following notation. For any positive $m$, denote by $\bfI_m$ the identity matrix of size $m$. $\vzero$ and $\mathbf{1}$ denote the zero vector and all-ones vector, respectively. For $i\in [m]$, we use $\bfe_{i}$ to denote the vector whose $i$-th entry is one and other entries are $0$.

\begin{comment}
In this section, we introduce basic notations and definitions used throughout the paper.

\begin{itemize}
\item Let $[i,j]$ denote the set $\{i, i+1, \ldots, j\}$ for integers $i < j$, and let $[i]$ denote the set $\{1, 2, \ldots, i\}$. Furthermore, $\bbF_q$ denotes the finite field with $q$.
\item For any positive $a$, denote $\bfI_a$ the identity matrix of size $a$.
\item $\vzero$ and $\mathbf{1}$ denote the zero vector and all-ones vector, respectively. For $i\in [m]$, we use $\bfe_{i}$ to denote the vector whose $i$-th entry is one and other entries are $0$.
\end{itemize}
    
\end{comment}

\subsection{MDS Array Code}

Let $\mathcal{C} \subseteq \mathbb{F}_q^{M \times N}$ be an $(N, k, M)$ array code. A codeword $\bfC\in\mathcal{C}$ takes the form $\bfC = \left[\bfb^{(1)}, \bfb^{(2)}, \ldots, \bfb^{(N)}\right]$, where each column $\bfb^{(j)} = \left(b_1^{(j)}, \ldots, b_M^{(j)}\right)$ belongs to $\bbF_q^M$ and corresponds to the data stored in node $j$ where $j \in [N]$.
The code $\mathcal{C}$ is defined via a generator matrix $\bfG \in \mathbb{F}_q^{kM \times NM}$, which is partitioned into $N$ column blocks as:
\begin{equation}\label{eq:gen_mat}
\bfG = 
\begin{bmatrix}
\bfG_1 & \bfG_2 & \cdots & \bfG_N
\end{bmatrix}
=
\begin{pmatrix}
\bfG_{1,1} & \bfG_{1,2} & \cdots & \bfG_{1,N} \\
\bfG_{2,1} & \bfG_{2,2} & \cdots & \bfG_{2,N} \\
\vdots     & \vdots     & \ddots & \vdots     \\
\bfG_{k,1} & \bfG_{k,2} & \cdots & \bfG_{k,N}
\end{pmatrix},
\end{equation}
where each block $\bfG_{i,j} \in \mathbb{F}_q^{M \times M}$ for $i \in [k]$ and $j \in [N]$. Each column block $\mathbf{G}_j$ serves as the encoding matrix for node $j$.

Let $\bfm \in \mathbb{F}_q^{kM}$ be the information vector formed by stacking $k$ blocks $\bfm^{(i)} \in \mathbb{F}_q^M$, i.e.,
\[
\bfm =
\begin{pmatrix}
\bfm^{(1)} \\
\bfm^{(2)} \\
\vdots \\
\bfm^{(k)}
\end{pmatrix},
\] where each $\bfm^{(i)} \in \mathbb{F}_q^M$ consists of the $M$ symbols stored in the $i$-th systematic node. The codeword $\mathbf{C} = [\bfb^{(1)}, \ldots, \bfb^{(N)}]$ is therefore generated as
$\bfb^{(i)} = \bfG_i^\top \bfm, \quad \forall\, i\in [N]$.

Throughout this work, we focus on {\em systematic} array codes $\mathcal{C}$, where for all $i \in [k]$, the $i$-th node directly stores the message block $\bfm^{(i)}$, i.e., $\bfb^{(i)} = \bfm^{(i)}$.

Moreover, the code $\mathcal{C}$ is said to have \emph{MDS property} if the original file can be recovered from any $k$ out of the $N$ nodes. Equivalently, we require that certain $kM\times kM$ submatrices of the generator matrix have full rank.

\begin{lemma}[MDS Property]\label{lem:MDS}
An $(N, k, M)$ array code defined by generator matrix of the form~\eqref{eq:gen_mat} has the MDS property if and only if for any index set $\mathcal{S} \subseteq [N]$ with $|\mathcal{S}| = k$, the corresponding block submatrix  $\bfG_\mathcal{S}$ of the generator matrix,
\begin{equation}
\bfG_\mathcal{S}^\top = 
\begin{bmatrix}
\bfG_{i_1}^\top \\
\vdots \\
\bfG_{i_k}^\top
\end{bmatrix}\in \mathbb{F}_q^{kM \times kM},
\end{equation}
formed by concatenating the block-columns $\bfG_i$ for $i \in \mathcal{S}$, must be full-rank, i.e.,
$\operatorname{rank}(\bfG^\top_\mathcal{S}) = kM$.
\end{lemma}

\subsection{Repair Bandwidth and Rebuilding Ratio}

For an $(N,k,M)$ array code, a key metric is the \emph{repair bandwidth}, that is, the total amount of information downloaded from the helper nodes during the repair of a failed node. 
The fundamental cut-set bound of Dimakis \textit{et al.}~\cite{dimakis2010network} shows that the repair bandwidth $\gamma(d)$ obeys the following inequality
\begin{equation}\label{eq:cutset}
    \gamma(d) \;\geq\; \gamma^*(d) \triangleq \frac{d}{d-k+1}\,M,
\end{equation}
where $d \in [k,N-1]$ denotes the number of helper nodes contacted. 

\begin{definition}[Optimal Repair and Access Properties]
An $(N,k,M)$ MDS array code is said to have  
\begin{itemize}
    \item \emph{optimal repair property} if the equality 
    $\gamma(d) = \gamma^*(d)$ holds in~\eqref{eq:cutset}, that is, each helper contributes exactly $M/(d-k+1)$ symbols, and
    \item \emph{optimal access property} if these symbols are accessed directly without additional computation.
\end{itemize}
\end{definition}

A convenient way to benchmark repair bandwidth efficiency is via the \emph{rebuilding ratio}.  

\begin{definition}[Rebuilding Ratio]
Let node $i$ be repaired by contacting a set of helpers $\cH_i$ with $|\cH_i|=d$. 
For each helper $j \in \cH_i$, let $\Read_{i,j} \subseteq [M]$ denote the indices of symbols read from node $j$. 
The \emph{rebuilding ratio} of the code is
\[
\rho \;=\; \max_{i \in [k]} \; \max_{j \in \cH_i} \;\frac{|\Read_{i,j}|}{M}.
\]
\end{definition}

Thus, the rebuilding ratio quantifies the worst-case load on a single helper. It follows from \eqref{eq:cutset} that codes with optimal access property have rebuilding ratio at least $1/(d-k+1)$.

\subsection{Skip Cost and Repair-Fragmentation Ratio}

Beyond the amount of data read, an equally important factor in repair performance is the \emph{contiguity} of these reads within each helper node. 
When symbols are scattered, the repair process may incur higher I/O latency even if the total number of symbols accessed is small. 
We quantify this phenomenon using the notion of \emph{skip cost}.

\begin{definition}[Skip Cost]\label{def:skipcost}
Let $\bfb=(b_1,b_2,\ldots,b_M)$ denote the contents of a helper node, and suppose that a repair scheme reads symbols 
$\Read=\{b_{i_1},\ldots,b_{i_t}\}$ with $i_1 < \cdots < i_t$. 
The \emph{skip cost} of $\Read$ is %with respect to $\bfb$ is
\[
\cost(\Read)\triangleq i_t - i_1 - (t-1).
\]
If the indices in $\Read$ form a consecutive block, then $\cost(\Read)=0$, which is optimal.
\end{definition}

\begin{definition}[Skip Cost of a Code]
An array code has a repair scheme with \emph{skip cost $\sigma$} if, for any failed node $\bfb$, there exists a helper set $\cH$ and corresponding read sets $\{\Read_h : h \in \cH\}$ such that
\begin{enumerate}[(i)]
    \item $\bfb$ can be reconstructed from $\bigcup_{h\in\cH} \Read_h$, and
    \item $\sum_{h\in\cH} \cost(\Read_h) \le \sigma$.  
\end{enumerate}
In particular, if $\sigma=0$, we say the code has \emph{zero skip cost}.
\end{definition}

While skip cost captures the fragmentation within each helper, it does not account for the total number of symbols read. 
To balance these two factors, we introduce the following normalized measure.

\begin{definition}[Repair-Fragmentation Ratio (RFR)]
Let $\cC$ be an $(N,k,M)$ array code. For the repair of node $i \in [N]$ from a helper set $\cH_i$, let $\Read_{i,j}$ denote the indices read from helper $j$, and let $\cost(\Read_{i,j})$ be the corresponding skip cost. 
The \emph{repair-fragmentation ratio} of the code is
\[
\mathsf{RFR} \;=\; 
\max_{i \in [N]} \;
\frac{\displaystyle \sum_{j \in \cH_i} \Big(|\Read_{i,j}| \;+\; \cost\left(\Read_{i,j}\right)\Big)}{NM}.
\]
\end{definition}

To evaluate the efficiency of repair schemes, we now have three complementary metrics. 
The \emph{rebuilding ratio} measures the fraction of contents read from the busiest helper, capturing load balance. 
The \emph{skip cost} measures the fragmentation of reads within each helper, with zero skip cost corresponding to contiguous access. 
Finally, the \emph{repair-fragmentation ratio (RFR)} combines both dimensions by normalizing the total reads and skips against the overall packet size. 
Together, these metrics provide a comprehensive view of the repair performance of an array code.

\subsection{Zigzag Codes and MSR Codes}
\label{sec:zigzag-msr}
Zigzag codes, introduced by Tamo, Wang, and Bruck~\cite{Tamo2013}, are a well-known family of $(N,k,M)$ MDS array codes that achieve the optimal access property and the optimal rebuilding ratio for single systematic node failures. 
Specifically, with $p=N-k$ parity nodes and subpacketization $M=p^m=(N-k)^m$, zigzag codes achieve an optimal rebuilding ratio of $1/p$. 
For the special cases $p=2$ and $p=3$, the required finite field size can be as small as $3$ and $4$, respectively. 
For $p=2$, when $k$ is chosen independently of $M$, it suffices for the field size to exceed $M(k-1)+1$ to ensure the MDS property~\cite{Tamo2013}.

A closely related class of codes is the Minimum Storage Regenerating (MSR) codes of Dimakis \textit{et al.}~\cite{dimakis2010network}. 
For the case $d = N-1$, explicit MSR codes with the optimal access property have been constructed in~\cite{ye2017explicitnearly,sasidharan2016explicit,li2018generic} with subpacketization $\ell = p^{\lceil n/p \rceil}$ over fields of linear size. 
The fundamental limitation, however, is that small field size and small subpacketization cannot be achieved simultaneously for MSR codes, as shown in~\cite{balaji2022lower}. 
Later constructions such as~\cite{li2024msr} aim to match this lower bound as closely as possible, achieving optimal access and rebuilding ratio over small fields but at the expense of exponentially large subpacketization. 

In this work, we take a different approach: rather than insisting on the MSR requirement, we relax it and ask whether one can still achieve an optimal rebuilding ratio under this broader setting. 
By focusing on zigzag codes with $M=2^m$ over fields of characteristic two, we show that the MDS property holds for arbitrary $p$, $k$, and $M$ whenever $|\bbF|\ge N$. 
This significantly extends the range of achievable parameters beyond what was known in~\cite{Tamo2013}. 
Moreover, we introduce a new algebraic framework based on explicit orderings and subgroup structures—tools not considered in the original zigzag construction—that allow us to design repair-by-transfer schemes with bounded skip cost and low repair-fragmentation ratio, while still retaining optimal access and optimal rebuilding ratio for single-node repair. 
Thus, our work demonstrates that zigzag codes can be both more flexible and more efficient than previously thought.

\section{The Zigzag Code Framework}
\label{sec:zigzag}

In this section, we revisit the zigzag framework introduced by Tamo, Wang, and Bruck~\cite{Tamo2013}. 
We describe in detail how parity symbols are constructed and provide algebraic conditions under which a failed node can be recovered.
While conditions similar to Lemma~\ref{lem:recovery} appeared in Tamo–Wang–Bruck, we restate and slightly adapt them here to keep the exposition self-contained and aligned with our notation.

Let $\cC$ be an $(N, k, M)$ systematic MDS array code over the finite field $\bbF_q$, where $k$ denotes the number of systematic nodes and $p = N - k$ denotes the number of parity nodes. 
To facilitate an algebraic description of the code construction, particularly for defining parity symbols and analyzing repair properties, we index the rows of the array using elements from a finite additive group, referred to as the \emph{row-index group}. Specifically, let $\cG$ be the additive group of order $M$, and label each row of the array by an element $g \in \cG$. Each systematic symbol is denoted $a^{(j)}_g$, where $g \in \cG$ and $j \in [k] = \{1,2,\ldots,k\}$. 
For each parity node $i \in [p]$, we associate a permutation label
\begin{equation}\label{eq:perm_label}
\bfS_i = \left( u_1^{(i)}, u_2^{(i)}, \dotsc, u_k^{(i)} \right), \quad \text{with } u_j^{(i)} \in \cG.
\end{equation}
Then, in row $g \in  \cG $ of the $i$-th parity column, we store the symbol 
\begin{equation}\label{checknode:def}
\sum_{j \in [k]} \gamma_{g,j}^{(i)} \, a^{(j)}_{g + u_j^{(i)}},
\end{equation}
where $\gamma_{g,j}^{(i)} \in \bbF$ are coefficients to be specified. 

\begin{remark}
For any fixed $j \in [k]$ and $g \in \cG$, notice that the map $g \mapsto g + u_j^{(i)}$ defines a permutation on the row indices. This means that each parity symbol is formed by combining one symbol from each information column, taken at row positions that are permuted according to the label $\bfS_i$.
\end{remark}

    \begin{remark}
        In the rest of the paper, we omit the coefficients $\gamma_{g,j}^{(i)}$ and represent each parity symbol more compactly as
\begin{equation}\label{eq:parity_symbol}
        \bigboxplus_{j \in [k]} a^{(j)}_{g + u_j^{(i)}},
\end{equation}
Indeed, prior work~\cite{Tamo2013,chee2024repairing} showed that when the field size is sufficiently large, it is possible to choose coefficients that guarantee the MDS property. In this work, we strengthen that result. In Section~\ref{sec:MDSproperty}, we show that under mild assumptions, such coefficients always exist whenever $|\bbF| \ge N$. Moreover, we provide an explicit expression for these coefficients.
\end{remark}

Next, we examine how the array code enables efficient node repair. In particular, we identify conditions under which the failure of a single systematic node can be resolved by accessing only two parity nodes. The following lemma provides a sufficient condition for such recovery based on the group-theoretic structure of the permutation labels.

\begin{lemma}[Recovery Lemma]\label{lem:recovery}
Consider an $(N, k, M)$ systematic array code with rows indexed by elements of an additive group $\cG$. Suppose that the $j$-th systematic node, for some $j \in [k]$, has failed. Let the two parity nodes be accessed:
\begin{itemize}
    \item $p_s$, with permutation label $\bfS_s = (u_1, u_2, \ldots, u_k)$,
    \item $p_t$, with permutation label $\bfS_t = (v_1, v_2, \ldots, v_k)$,
\end{itemize}
where each $u_i, v_i \in \cG$. Suppose there exists a subgroup $H \subseteq \cG$ of index two, i.e., $|H| = |\cG|/2$, such that one of the following two cases holds: 
%{\color{blue}
\begin{description}
    \item[Case 1:]~
    \begin{enumerate}[(R1)]
        \item The pair $\{u_j, v_j\}$ forms a complete set of coset representatives of $H$ in $\cG$;
        \item For every $i \ne j$, the pair $\{u_i, v_i\}$ lies entirely within a single coset of $H$.
    \end{enumerate}
    
    \item[Case 2:]~
    \begin{enumerate}[(R1$'$)]
        \item The pair $\{u_j, v_j\}$ lies entirely within a single coset of $H$;
        \item For every $i \ne j$, the pair $\{u_i, v_i\}$ forms a complete set of coset representatives of $H$ in $\cG$.
    \end{enumerate}
\end{description}
%}
\begin{comment}
\hanmao{\begin{enumerate}[(R1)]
        \item The pair $\{u_j, v_j\}$ belongs to $H$ in $\cG$; 
        \item For every $i \ne j$, the pair $\{u_i, v_i\}$ forms a complete representatives of $H$.
        \end{enumerate}}    
\end{comment}
Then, the $j$-th systematic node can be recovered by downloading $|H|$ symbols from each of the $k - 1$ surviving systematic nodes and from the two parity nodes $p_s$ and $p_t$.
\end{lemma}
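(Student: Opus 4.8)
The plan is to exhibit, for each of the two cases, an explicit choice of which $|H|$ rows to read from the two parity nodes $p_s,p_t$ and from each surviving systematic node, and then to show that the resulting equations let us cancel the known systematic symbols and solve for all $M$ symbols $a^{(j)}_g$ of the failed node. Throughout I would fix the coset decomposition $\cG = H\,\sqcup\,(H+w)$ with respect to the index-two subgroup $H$, and use the elementary fact that $2w\in H$ for any $w\notin H$ (since $\cG/H\cong\bbZ_2$).

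For Case 1, I would read from $p_s$ and from $p_t$ exactly the rows indexed by the subgroup $H$. For a surviving systematic node $i\ne j$, row $g$ of $p_s$ uses the symbol $a^{(i)}_{g+u_i}$ and row $g$ of $p_t$ uses $a^{(i)}_{g+v_i}$; as $g$ runs over $H$ these lie in $H+u_i$ and $H+v_i$ respectively, and by (R2) these are the \emph{same} coset of $H$, of size $|H|$, so from node $i$ we read exactly that one coset. For the failed node $j$, the read rows of $p_s$ involve $a^{(j)}_h$ with $h$ ranging over $H+u_j$ and those of $p_t$ involve $a^{(j)}_h$ with $h$ ranging over $H+v_j$; by (R1) the set $\{u_j,v_j\}$ has one element in each coset, so $H+u_j$ and $H+v_j$ are the two distinct cosets and their union is all of $\cG$. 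Now I would process each read parity row individually: in such a row every systematic symbol other than the $j$-th one has already been downloaded (it lies in the single coset we read from its node), so subtracting these leaves a nonzero scalar multiple of one symbol $a^{(j)}_h$, which we recover by dividing by the coefficient $\gamma^{(s)}_{g,j}$ (resp. $\gamma^{(t)}_{g,j}$), which is nonzero. Ranging over $g\in H$ and over both parity nodes recovers $a^{(j)}_h$ for every $h\in\cG$. The download is $|H|$ symbols from each of the $k-1$ surviving systematic nodes and from each of $p_s,p_t$, i.e. $\frac{k+1}{2}M$, which matches the cut-set value $\gamma^*(k+1)$.

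Case 2 is handled the same way after changing which rows are read: take the rows indexed by $H$ from $p_s$ and the rows indexed by $H+w$ from $p_t$. For $i\ne j$, (R2$'$) says $\{u_i,v_i\}$ is a full set of coset representatives, and a one-line check using $2w\in H$ shows $H+u_i=(H+w)+v_i$, so again node $i$ contributes a single coset of $|H|$ symbols. For the failed node, (R1$'$) forces $H+u_j$ and $(H+w)+v_j$ to be the two distinct cosets, so together they cover $\cG$; peeling off known symbols and dividing by the (nonzero) coefficients recovers all of node $j$ exactly as before.

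The coset bookkeeping is routine; the one place that requires care is checking that the chosen read sets from $p_s$ and $p_t$ simultaneously (i) confine each surviving node's accessed symbols to a single coset of $H$ and (ii) make the two parity nodes cover complementary cosets of the failed node's symbols. This is precisely what conditions (R1)--(R2) (resp. (R1$'$)--(R2$'$)) are designed to ensure, so the only non-mechanical step is identifying the correct read sets ($H$ from both parities in Case 1; $H$ and $H+w$ in Case 2). Invertibility is immediate once we use that the suppressed coefficients $\gamma^{(\cdot)}_{\cdot,j}$ appearing in each parity symbol are nonzero, so this lemma does not rely on the MDS analysis of Section~\ref{sec:MDSproperty}.
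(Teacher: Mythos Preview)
Your proposal is correct and follows essentially the same approach as the paper: in Case~1 you read the rows indexed by $H$ from both parities (the paper does the same), and in Case~2 you read $H$ from $p_s$ and the nontrivial coset from $p_t$ (again matching the paper), with the coset bookkeeping for the surviving nodes carried out identically. The only cosmetic differences are that you phrase the systematic reads directly as the coset $H+u_i$ whereas the paper writes them as $\{a^{(i)}_{g+u_i}:g\in H\}$, and you make explicit the (implicit in the paper's $\boxplus$ notation) requirement that the scalar coefficients $\gamma^{(\cdot)}_{\cdot,j}$ be nonzero.
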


\begin{proof}
Since $H$ is a subgroup of index two, we can partition $\cG$ into two cosets: $H$ and $\cG \setminus H$. 
Then, by definition, the parity-check symbols in row $g \in \mathcal{G}$ of nodes $p_s$ and $p_t$ are given by:
\[
\bigboxplus_{z \in [k]} a^{(z)}_{g + u_z}, \quad \text{and} \quad \bigboxplus_{z \in [k]} a^{(z)}_{g + v_z},
\]
respectively.

We first consider Case 1. To recover the information at node $j$ using the two parity nodes $p_s$ and $p_t$ and the remaining $(k - 1)$ systematic nodes, we proceed as follows:

\begin{itemize}
    \item From each surviving systematic node $i \ne j$, we download the symbols $a^{(i)}_{g + u_i}$ for all $g\in H$. 
    \item From parity node $p_s$, we download the symbols
    $\bigboxplus_{z \in [k]} a^{(z)}_{g + u_z}$ for all $g\in H$.
    
    \item From parity node $p_t$, we download the symbols
    $\bigboxplus_{z \in [k]} a^{(z)}_{g + v_z}$ for all $g \in H$.
    %\[
    %\left\{ \boxplus_{j \in [k]} a^{(j)}_{g + v_j} \;\middle|\; g \in  H \right\}.
    %\]
\end{itemize}

Next, it remains to recover all information symbols of node $j$ from these downloaded symbols. Now, from (R1), since $\{u_j, v_j\}$ forms a complete set of coset representatives of $H$, every symbol $a_{g}^{(j)}$ belongs to one of the two cosets. Hence, the symbol is either of the form $a_{g + u_j}^{(j)}$ or $a_{g + v_j}^{(j)}$ for some $g \in H$.

\begin{itemize}
    \item  Recovery of $a^{(j)}_{g + u_j}$ for  $g \in H$: we observe that the parity symbol $\boxplus_{z \in [k]} a^{(z)}_{g + u_z}$ from parity node $p_s$ includes the unknown $a^{(j)}_{g + u_j}$, and all other terms $a^{(i)}_{g + u_i}$ for $i \ne j$ can be obtained from the remaining information nodes.

    \item  Recovery of $a^{(j)}_{g + v_j}$ for  $g \in H$: similar to before, the parity symbol $\boxplus_{z \in [k]} a^{(z)}_{g + v_z}$ from parity node $p_t$ includes the unknown $a^{(j)}_{g + v_j}$. 
    As before, the remaining terms are known, so we recover the desired symbol.
    \end{itemize}
    
For Case 2, according to (R1$'$), ${u_j, v_j}$ lies entirely within a single coset of $H$. Without loss of generality, suppose $u_j, v_j \in H$. The recovery proceeds as follows:
\begin{itemize}
    \item From each surviving systematic node $i \ne j$, we download the symbols $a^{(i)}_{g + u_i}$ for all $g\in H$. 
    \item From parity node $p_s$, we download the symbols
    $\bigboxplus_{z \in [k]} a^{(z)}_{g + u_z}$ for all $g\in  H$.
    \item From parity node $p_t$, we download the symbols
    $\bigboxplus_{z \in [k]} a^{(z)}_{g + v_z}$ for all $g\in \cG\setminus H$.
\end{itemize}

As $g$ runs over $H$ and $\cG\setminus H$, the sets 
$\{a_{g+u_j}^{(j)} : g\in H\}$ and 
$\{a_{g+v_j}^{(j)} : g\in \cG\setminus H\}$ 
together cover all symbols of node~$j$. 
Moreover, for any surviving node $i \ne j$, condition (R2$'$) ensures that 
$\{u_i,v_i\}$ forms a complete set of coset representatives of $H$ in $\cG$, 
so $v_i-u_i \notin H$. 
Hence, for any $g \in \cG \setminus H$ we can set 
$h = g+(v_i-u_i)\in H$, which gives
$a^{(i)}_{g+v_i} = a^{(i)}_{h+u_i}$,
and the right-hand side belongs to the downloaded set 
$\{ a^{(i)}_{g+u_i} : g\in H\}$. 
Therefore, together with the directly downloaded symbols from the surviving systematic nodes, every $a^{(j)}_{g}$ for $g\in\cG$ can be recovered in Case~2.
\end{proof}

\begin{remark}
Condition (R1) ensures that the information symbols at a failed node $j$
 can be fully recovered.  Specifically, if $\{u_j, v_j\}$ is a complete set of coset representatives for $H$ in $\mathcal{G}$, then the collections $\{a^{(j)}_{g+u_j} \mid g \in H\}$ and $\{a^{(j)}_{g+v_j} \mid g \in H\}$ together form the complete set of symbols at node $j$. Because $H$ has an index of two in $\mathcal{G}$, the same property holds when indexing over the other coset $\mathcal{G} \setminus H$. 
 This is because if we take $u_j \in H$ and $v_j \in \mathcal{G} \setminus H$, the index sets become $\{u_j + g \mid g \in \mathcal{G} \setminus H\} = \mathcal{G} \setminus H$ and $\{v_j + g \mid g \in \mathcal{G} \setminus H\} = H$,  which together cover all symbols.
Moreover, condition (R2) ensures that all symbols indexed by $g \in H$ downloaded from the two parity nodes are composed of the symbols indexed by $g \in H$ that are downloaded from each surviving systematic node. The same property holds for symbols  indexed by $g \in \cG \setminus H$ downloaded from the systematic nodes $i\neq j$ and the two parity nodes $p_s,p_t$. Therefore, in the above proof, downloading the symbols
$ a^{(i)}_{g + u_i},\quad\bigboxplus_{z \in [k]} a^{(z)}_{g + u_z},\quad
\bigboxplus_{z \in [k]} a^{(z)}_{g + v_z}
$ from nodes $i\neq j$, $p_s$ and $p_t$ for all $g \in \cG \setminus H$ also enables the recovery of failed node $j$. Similarly, the same arguments apply when (R1$'$) and (R2$'$) are satisfied.

\end{remark}

We give a simple example below to show how the lemma~\ref{lem:recovery} works.

    \begin{example}\label{ex:1}
  Let $m=2$ and consider the additive group $  \mathcal{G} = \bbZ_2^m$, so that $  |\mathcal{G}| = 4$. Let \( H_1 = \{00, 01\} \) and $ H_2 = \{00, 11\} $ be two subgroups of \( \mathcal{G} \), each of index two.   Consider an  array code  $\cC$ with $N = 7$ nodes, consisting of $k = 4$ systematic nodes and $p = 3$ parity nodes.  The rows of the array are indexed by elements $\bfg \in \mathcal{G}$.
Each systematic symbol is denoted by \( \bfa^{(j)}_{\bfg} \), where \( j \in [4] \) and \( \bfg \in \cG \). For notational simplicity, we write \( a^{(j)}_{\bfg} \) as \( \bfg_j \). For example, \( 00_1 \) denotes the symbol \( a^{(1)}_{00} \), stored in node 1 and row \( 00 \). The  permutation label associated with the parity nodes $\bfp^{(i)}$ for $i\in[3]$, is given as follows:

\begin{equation*}
\bfS_1 = (\vzero, \vzero, \vzero, \vzero), \bfS_2 = (\vone, \bfe_2, \vzero,\vzero),\bfS_3 = ( \vzero, \vzero, \vone,\bfe_2).
\end{equation*}

\noindent where $\vzero=(0,0)$, $\vone=(1,1)$, and $\bfe_2=(0,1)$ are elements in the additive group $\cG$. For parity symbols, we use $\bfg_i \boxplus \bfx_j \boxplus \bfy_k$ to represent a linear combination of $a_\bfg^{(i)}$, $a_\bfx^{(j)}$, and $a_\bfy^{(k)}$. 
For example, with the permutation label $\bfS_1 = (\vzero, \vzero, \vzero, \vzero)$, the first row stored in $\bfp^{(1)}$ is  $00_1 \boxplus 00_2 \boxplus 00_3 \boxplus 00_4$.
The placement of code symbols is  shown in Fig~\ref{fig:zigzagk4}.
            \begin{figure*}[ht]
                \centering
                \setlength\tabcolsep{1pt} 
                \begin{tabular}{|c|c|c|c|c|c|c|c|c|c|c|}
                    \hline
              \multirow{2}{*}{$\bfa^{(1)}$} & \multirow{2}{*}{${\color{red}\bfa^{(2)}}$} & \multirow{2}{*}{$\bfa^{(3)}$}&\multirow{2}{*}{$\bfa^{(4)}$}   & 
                      $\bfp^{(1)}$ & $\bfp^{(2)}$ & $\bfp^{(3)}$ \\
                  & & & &  $\bfS_1=(\vzero,\vzero,\vzero,\vzero)$ & $\bfS_2=(\vone,\bfe_2,\vzero,\vzero)$ & $\bfS_3=(\vzero,\vzero,\vone,\bfe_2)$\\ \hline 
                    $00_1$ & ${\color{red}00_2}$ & $00_3$ & $00_4$ & 
                $00_1 \boxplus 00_2 \boxplus 00_3 \boxplus 00_4$ &$11_1\boxplus01_2\boxplus 00_3\boxplus00_4$ & 
                $ 00_1 \boxplus 00_2 \boxplus 11_3\boxplus 01_4 $ \\
                
                {\color{blue}$01_1$}&      {\color{red}$01_2$} & {\color{blue}$01_3$} &      {\color{blue}$01_4$} & 
              {\color{blue}  $01_1 \boxplus 01_2 \boxplus 01_3 \boxplus 01_4$} &{\color{blue}$10_1 \boxplus 00_2\boxplus 01_3\boxplus 01_4$ } &
              $  01_1 \boxplus 01_2 \boxplus 10_3 \boxplus 00_4 $  \\
                
              {\color{blue}$10_1$} &      {\color{red}$10_2$} & {\color{blue}$10_3$} &      {\color{blue}$10_4$} & 
                    {\color{blue}$10_1 \boxplus 10_2 \boxplus 10_3 \boxplus 10_4$} 
                    &  {\color{blue}$ 01_1 \boxplus 11_2 \boxplus 10_3\boxplus 10_4$}
                    & $ 10_1 \boxplus 10_2 \boxplus 01_3\boxplus11_4$  \\
                
                    $11_1$ & ${\color{red}11_2}$ & $11_3$ & $11_4$ & 
                $11_1 \boxplus 11_2 \boxplus 11_3 \boxplus 11_4$ 
                & $ 00_1 \boxplus 10_2 \boxplus 11_3 \boxplus 11_4$ 
                & $ 11_1 \boxplus 11_2 \boxplus 00_3\boxplus 10_4 $   \\
                    \hline
                    \end{tabular}

                \caption{Example of a $(7, 4, 4)$ array code. }

                \label{fig:zigzagk4}
            \end{figure*}
    Assume that the information node $\bfa^{(2)}$ in Fig.~\ref{fig:zigzagk4} has failed. In our scheme, the helper information nodes are $\bfa^{(j)}$ for $j\in\{1,3,4\}$ while the helper parity nodes are $\bfp^{(1)}$ and $\bfp^{(2)}$. Specifically, the reads from the parity nodes contain the following information symbols. 
    
        \begin{itemize}
        \item From $\boldsymbol{p}^{(1)}$,  we obtain 
    \begin{align} \nonumber
    \{01_1,{\color{red}01_2},01_3,01_4\} \cup  \{10_{1},{\color{red}10_2},10_3,10_4\}. 
            \end{align}
    \item From $\boldsymbol p^{(2)}$, we obtain
     \begin{align} \nonumber
       \{10_1, {\color{red}00_2}, 01_3,01_4\}\cup
       \{ 01_1,{\color{red}11_2},10_3,10_4\}.
            \end{align}
    \end{itemize}
    
By combining the symbols $\{\bfg_i : \bfg \in \mathcal{G} \setminus H_2\}$ downloaded from $\bfa^{(i)}$ for $i \in [4] \setminus \{2\}$ together with the information obtained from the parity nodes $\bfp^{(1)}$ and $\bfp^{(2)}$, we can recover all symbols $\{\bfg_2 : \bfg \in \mathcal{G}\}$ of node $\bfa^{(2)}$.  Similarly, if the node $\bfa^{(1)}$ is erased, then from Fig.~\ref{fig:zigzagk4}, the helper information nodes are $\bfa^{(i)}$ for $i\in [4]\setminus \{1\} $, while the helper parity nodes are also $\bfp^{(1)}$ and $\bfp^{(2)}$.
      \begin{itemize}
        \item From $\boldsymbol{p}^{(1)}$, we obtain
    \begin{align} \nonumber
    \{{\color{red}00_1},00_2,00_3,00_4\} \cup  \{{\color{red}01_{1}},01_2,01_3,01_4\} .
            \end{align}
    \item From $\boldsymbol p^{(2)}$, we obtain
     \begin{align} \nonumber
 \{ {\color{red}11_1},01_2,00_3,00_4\} \cup
       \{{\color{red}10_1},00_2, 01_3,01_4\}.
            \end{align}
    \end{itemize}
 \noindent By downloading the information symbols
    $\{\boldsymbol{g}_{i}: \boldsymbol{g}\in H_1\}$ from $\bfa^{(i)}$ for $i\in [4]\setminus \{1\} $, it is clear that we can recover all values in $\{\boldsymbol{g}_{1}: \boldsymbol{g}\in\cG\}$ of node $\bfa^{(1)}$. 

    Furthermore, to repair the node $\bfa^{(3)}$, we download the symbols $\{\bfg_i : \bfg \in     H_1\}$ from each remaining information node $i \in [4] \setminus \{3\}$ and the parity nodes $\bfp^{(1)}$, $\bfp^{(3)}$, respectively. For the recovery of node $\bfa^{(4)}$, we download the symbols $\{\bfg_i : \bfg \in  \cG\setminus H_2
\}$ from each remaining information node $i \in [4] \setminus \{4\}$, and also download the same rows from the parity nodes $\bfp^{(1)}$ and $\bfp^{(3)}$.
    \end{example}

\begin{remark}
 In the work of TWB, the group $\cG$ was chosen as $\mathbb{Z}_p^m$, namely, the set of $m$-dimensional vectors over the ring of integers modulo $p$, where $p$ is the number of parity nodes, and $p^m$ is subpacketization level.  It is worth noting that our discussion in this work is restricted to subgroups of index two. The results in TWB also consider subgroups of arbitrary index in $\mathcal{G}$ (see, e.g., Lemma~14 in \cite{Tamo2013}).
\end{remark}

%\section{ Constructions of MDS array codes over small finite field}\label{sec:MDSproperty}
%\section{Almost All Zigzag Codes are MDS over Small Finite Fields}\label{sec:MDSproperty}

\section{Zigzag Codes with Row-Index Group 
$\cG=\bbZ_2^m$ admit MDS Coefficients Over Linear-Sized Fields of Characteristic Two}\label{sec:MDSproperty}

%\hanmao{Sorry, Wenqin, I was very undecided on this title. On second thoughts, I found that "almost all" was overselling things.
%I introduced the term {\em row-index group}. Please help to introduce this term in Section II or Section III.}

In this section, we show that zigzag codes admit the MDS property under mild conditions: 
(i) the row-index group is $\cG=\bbZ_2^m$, and 
(ii) the field $\bbF$ has characteristic two and size at least the array length $N$ (that is, $|\bbF|\ge N$).
Earlier small-field guarantees required the number of parity nodes to be restricted (namely, $p\in\{2,3\}$), in which case the row-index group was taken to be $\bbZ_p^m$\cite{Tamo2013}, coupling the subpacketization via $M=p^m$ to the redundancy.
In contrast, our approach works with $\cG=\bbZ_2^m$ for arbitrary $p=N-k$, so $k$ and $M$ can be chosen independently, and we obtain MDS array codes with both small field size and small subpacketization.

Recall that any $(N,k,M)$ array code $\cC$ has a (block) generator matrix $\bfG \in \bbF^{kM \times NM}$ given by~\eqref{eq:gen_mat}.
For a zigzag code, the $k\times N$ submatrices (each with dimension $M\times M$) satisfy the following properties:
$\bfG_{ii}=\bfI_M$ for $i\in[k]$, 
$\bfG_{ij}=\mathbf{0}$ for $i\ne j$ with $i,j\in[k]$, 
and for each parity column $j>k$ one has $\bfG_{ij}=\gamma_{ij}\,\bfA^\top_{ij}$, 
where $\bfA^\top_{ij}$ is a permutation matrix determined by the label $\bfS_{j-k}$ (see, for example,~\cite{Tamo2013}).

%\hanmao{Check that the block indexing matches your \eqref{eq:gen_mat} convention. }

In the next theorem, we assign Cauchy-type coefficients to the permutation matrices in $\bfG$, yielding an $(N,k,2^m)$ MDS array code.

\begin{theorem}\label{thm:main}
For $ m \geq 1 $, let $\cG$ be the additive group $\bbZ_2^m $ and set $M = |\mathcal{G}| = 2^m $.
Let $p=N-k$, and let $\cC$ be an $(N,k,M)$ array zigzag code with $p$ parity nodes. 
Suppose the block generator matrix $\bfG^\top$ has the form
\begin{equation}\label{gen:G}
\bfG^\top = 
\begin{bmatrix}
    \bfI_M & \bf0 & \cdots & \bf0 \\
    \bf0 & \bfI_M & \cdots & \bf0 \\
    \vdots & \vdots & \ddots & \vdots \\
    \bf0 & \bf0 & \cdots & \bfI_M \\
    \gamma_{11}\,\bfA_{11} & \gamma_{12}\,\bfA_{12} & \cdots & \gamma_{1k}\,\bfA_{1k} \\
    \gamma_{21}\,\bfA_{21} & \gamma_{22}\,\bfA_{22} & \cdots & \gamma_{2k}\,\bfA_{2k} \\
    \vdots & \vdots & \ddots & \vdots \\
    \gamma_{p1}\,\bfA_{p1} & \gamma_{p2}\,\bfA_{p2} & \cdots & \gamma_{pk}\,\bfA_{pk}
\end{bmatrix},
\end{equation}
where each $\bfA_{ij}$ is a permutation matrix indexed by $\cG$.

Let $\alpha_1,\ldots,\alpha_p$ and $\beta_1,\ldots,\beta_k$ be $p+k=N$ distinct elements of a finite field $\bbF$ of characteristic two. If the Cauchy coefficients is given by
\begin{equation}\label{eq:cauchy-coefficients}
\gamma_{ij} = (\alpha_i - \beta_j)^{-1}, \qquad 1 \le i \le p,\ 1 \le j \le k,
\end{equation}
then $\cC$ is an $(N,k,M=2^m)$ MDS array code.
\end{theorem}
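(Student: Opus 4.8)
The plan is to verify the MDS criterion of Lemma~\ref{lem:MDS} directly. Fix $\cS\subseteq[N]$ with $|\cS|=k$ and write $\cS=\cS_0\cup\cS_1$, where $\cS_0\subseteq[k]$ collects the chosen systematic nodes and $\cS_1$ the chosen parity nodes; set $\cT=[k]\setminus\cS_0$ and $r=|\cS_1|=|\cT|$. If $r=0$ then $\bfG_\cS^\top=\bfI_{kM}$ and there is nothing to prove, so assume $r\ge 1$. In the block matrix $\bfG_\cS^\top$ (obtained from \eqref{gen:G} by selecting the block-rows indexed by $\cS$), each systematic block-row $i\in\cS_0$ equals $\bfI_M$ in column-block $i$ and $\mathbf 0$ elsewhere; using these $|\cS_0|$ block-rows to clear the $\cS_0$-columns of the $r$ parity block-rows is an $\bbF$-invertible row operation, and it leaves the $\cT$-columns of the parity block-rows untouched. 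Hence $\bfG_\cS^\top$ has the same rank as a block matrix that is $\bfI_{|\cS_0|M}$ together with the $rM\times rM$ block
\[
\bfB=\bigl(\gamma_{ij}\,\bfA_{ij}\bigr)_{i\in\cS_1,\ j\in\cT}.
\]
So it suffices to show that $\bfB$ is invertible for every such choice of $\cS$.

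Next I would extract the two structural facts that make $\bfB$ tractable. Since $\cC$ is a zigzag code with row-index group $\cG=\bbZ_2^m$, each $\bfA_{ij}$ is (up to transpose, which is immaterial) the regular-representation translation matrix of some $w_{ij}\in\cG$; because $\cG$ is abelian these matrices pairwise commute, and because $\cG$ has exponent two, $\bfA_{ij}^2=\bfI_M$. In characteristic two this forces $(\bfA_{ij}-\bfI_M)^2=\bfA_{ij}^2-2\bfA_{ij}+\bfI_M=0$, and more globally the $\bfA_{ij}$ all lie in the commutative $\bbF$-algebra $R\subseteq\operatorname{End}_\bbF(\bbF^M)$ generated by the translations. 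Writing $x_i=\bfA_{e_i}-\bfI_M$ (for $e_i$ the standard basis of $\bbZ_2^m$), one has $x_i^2=0$, the $x_i$ commute, and $\bfA_w-\bfI_M$ lies in the span of squarefree monomials in the $x_i$; hence the augmentation ideal $\mathfrak m=(x_1,\dots,x_m)$ of $R$ satisfies $\mathfrak m^{m+1}=0$, $R$ is local with $R/\mathfrak m\cong\bbF$, and the quotient map sends every translation matrix to $1$. I would then split $\bfB=\widetilde C+\bfN$ with $\widetilde C=(\gamma_{ij}\bfI_M)$ and $\bfN=\bigl(\gamma_{ij}(\bfA_{ij}-\bfI_M)\bigr)$; every block of $\bfN$ lies in $\mathfrak m$, so $\bfN\in M_r(\mathfrak m)$, which is a nilpotent two-sided ideal of $M_r(R)$ (as $M_r(\mathfrak m)^{m+1}\subseteq M_r(\mathfrak m^{m+1})=0$), whence $\bfN$ is nilpotent.

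Finally, $\widetilde C=C\otimes\bfI_M$ with $C=(\gamma_{ij})_{i\in\cS_1,j\in\cT}$ a square submatrix of the Cauchy matrix $\bigl((\alpha_i-\beta_j)^{-1}\bigr)$. Since $|\bbF|\ge N$ lets us pick the $N=p+k$ distinct elements $\alpha_1,\dots,\alpha_p,\beta_1,\dots,\beta_k$, every square submatrix of this Cauchy matrix is again a Cauchy matrix with distinct nodes, hence nonsingular, so $\widetilde C$ is invertible with $\det_\bbF\widetilde C=(\det C)^M\neq 0$. Then $\bfB=\widetilde C\bigl(\bfI_{rM}+\widetilde C^{-1}\bfN\bigr)$, and $\widetilde C^{-1}\bfN$ again has all blocks in $\mathfrak m$ (the blocks of $\widetilde C^{-1}$ are scalar multiples of $\bfI_M$), so it is nilpotent and $\bfI_{rM}+\widetilde C^{-1}\bfN$ is unipotent; therefore $\bfB$ is invertible. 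Combining with the reduction, $\bfG_\cS^\top$ has full rank for every $\cS$, which is the MDS property by Lemma~\ref{lem:MDS}.

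I expect the crux—and the only place both hypotheses are genuinely used—to be the nilpotency step: over a field of characteristic two, translation by an element of the elementary abelian group $\bbZ_2^m$ is unipotent, so the ``zigzag twists'' $\bfA_{ij}$ contribute only a nilpotent perturbation to $\bfB$ and are invisible to its rank, leaving only the scalar Cauchy pattern to certify. This is exactly what decouples $k$ and $M$, and it also shows the conclusion is insensitive to the particular choice of permutation labels $\bfS_i$. (If one prefers a representation-theoretic phrasing, the same computation is the identity $\det_\bbF\bfB=N_{R/\bbF}(\det_R\bfB)$, combined with the fact that $\det_R\bfB$ reduces to $\det C$ modulo the nilpotent ideal $\mathfrak m$; the elementary splitting above avoids invoking this.)
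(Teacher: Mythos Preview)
Your proof is correct and takes a genuinely different route from the paper's.

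Both arguments share the initial reduction of the MDS property to the invertibility of the $rM\times rM$ block $\bfB=(\gamma_{ij}\bfA_{ij})$, but diverge at that point. The paper computes $\det(\bfB)$ explicitly (Lemma~\ref{lem:cauchy-det}): it invokes a determinant formula for block matrices with pairwise-commuting blocks (Theorem~\ref{thm:Laplace}) to collapse $\det(\bfB)$ to $\det\bigl(\sum_{\pi\in S_r}\alpha_\pi\,\bfP^{(m)}_{u(\pi)}\bigr)$ with $\alpha_\pi=\prod_i\gamma_{i,\pi(i)}$, and then shows by induction on $m$ (Lemma~\ref{thm:perm_determinant}) that any $\bbF$-linear combination $\sum_\bfu\alpha_\bfu\bfP^{(m)}_\bfu$ has determinant $(\sum_\bfu\alpha_\bfu)^{2^m}$. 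The outcome is the exact identity $\det(\bfB)=\Delta^M$ with $\Delta$ the scalar Cauchy determinant. You, by contrast, never evaluate the determinant: you observe that in characteristic two the group algebra $\bbF[\bbZ_2^m]$ is local with nilpotent augmentation ideal $\mathfrak m$, so $\bfB$ differs from the invertible scalar matrix $C\otimes\bfI_M$ by an element of the nilpotent two-sided ideal $M_r(\mathfrak m)$ and is therefore itself invertible. Your argument is shorter and more structural---it makes transparent that the permutation labels are invisible to the rank, which is exactly why $k$ and $M$ decouple---while the paper's route yields a bonus, namely an explicit determinant formula. Your closing parenthetical about $\det_{\bbF}\bfB=N_{R/\bbF}(\det_R\bfB)$ is in fact precisely the abstract content of the paper's Theorem~\ref{thm:Laplace}--Lemma~\ref{thm:perm_determinant} combination.
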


To prove this theorem, we first state a lemma that evaluates the determinant of a structured block matrix whose blocks are permutation matrices indexed by the row-index group $\cG=\bbZ_2^m$.

\begin{lemma}\label{lem:cauchy-det}
    Let $M$ and $N$ be positive integers, and let $\bbF$ be a finite field of characteristic two.
    Let $\alpha_1,\ldots,\alpha_N$ and $\beta_1,\ldots,\beta_N$ be $2N$ distinct elements of $\bbF$, and define $\gamma_{ij}=(\alpha_i-\beta_j)^{-1}$ for $1\le i,j\le N$.
    Let $\bfA$ be the $MN\times MN$ block matrix
    \[
    \bfA =
    \begin{bmatrix}
      \gamma_{11}\,\bfA_{11} & \gamma_{12}\,\bfA_{12} & \cdots & \gamma_{1N}\,\bfA_{1N} \\
      \gamma_{21}\,\bfA_{21} & \gamma_{22}\,\bfA_{22} & \cdots & \gamma_{2N}\,\bfA_{2N} \\
      \vdots & \vdots & \ddots & \vdots \\
      \gamma_{N1}\,\bfA_{N1} & \gamma_{N2}\,\bfA_{N2} & \cdots & \gamma_{NN}\,\bfA_{NN}
    \end{bmatrix},
    \]
    where each $\bfA_{ij}$ is an $M\times M$ permutation matrix indexed by $\cG=\bbZ_2^m$. If
    \[
    \Delta \;=\;
      \frac{\displaystyle\prod_{1\le j<i\le N}(\alpha_i-\alpha_j)(\beta_j-\beta_i)}
           {\displaystyle\prod_{i=1}^{N}\prod_{j=1}^{N}(\alpha_i-\beta_j)},
    \]
    then $\det(\bfA)=\Delta^{M}$.
\end{lemma}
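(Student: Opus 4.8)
The plan is to exploit the structure of the blocks. In the zigzag construction each $\bfA_{ij}$ is the permutation matrix of a translation $g\mapsto g+u_{ij}$ on $\cG=\bbZ_2^m$, so (i) the whole family $\setof{\bfA_{ij}}$ lies in the image of the regular representation of the abelian group $\cG$ and hence pairwise commutes; and (ii) since every element of $\cG$ has order dividing two, $\bfA_{ij}^2=\bfI_M$, whence $(\bfA_{ij}-\bfI_M)^2=\bfA_{ij}^2-2\bfA_{ij}+\bfI_M=2(\bfI_M-\bfA_{ij})=0$ because $\operatorname{char}\bbF=2$. Thus $\setof{\bfA_{ij}-\bfI_M}$ is a commuting family of square-zero (in particular nilpotent) matrices over $\bbF$. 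Granting a simultaneous triangularization of this family, I would conjugate $\bfA$ block-wise, reorder rows and columns, and read off the determinant from the classical Cauchy formula.

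\emph{Step 1 (simultaneous triangularization).} The (non-unital) $\bbF$-subalgebra $\cN\subseteq M_M(\bbF)$ generated by the commuting nilpotents $\bfA_{ij}-\bfI_M$ consists entirely of nilpotent matrices (sums and products of commuting nilpotents are nilpotent), hence is a nilpotent algebra, i.e.\ $\cN^r=0$ for some $r$. A nilpotent algebra acting on $\bbF^M$ admits a complete invariant flag $0=V_0\subsetneq V_1\subsetneq\cdots\subsetneq V_M=\bbF^M$ with $\cN V_t\subseteq V_{t-1}$ for all $t$: inductively $\cN V\subsetneq V$, so one picks a hyperplane $V_{M-1}\supseteq\cN V$ (which is then $\cN$-invariant) and recurses. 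Choosing a basis adapted to this flag yields an invertible $\bfQ$ such that each $\bfU_{ij}:=\bfQ^{-1}\bfA_{ij}\bfQ$ is upper triangular with all diagonal entries equal to $1$.

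\emph{Step 2 (block conjugation, reordering, Cauchy formula).} Conjugating $\bfA$ by $\operatorname{diag}(\bfQ,\dots,\bfQ)$ ($N$ copies) replaces the $(i,j)$ block $\gamma_{ij}\bfA_{ij}$ by $\gamma_{ij}\bfU_{ij}$ without changing the determinant. Index the $MN$ coordinates by pairs $(i,\ell)$ with $i\in[N]$ (block) and $\ell\in[M]$ (position within a block), and reorder both rows and columns by the permutation that sorts these pairs primarily by $\ell$ and secondarily by $i$ (so the coordinates with a fixed $\ell$ are grouped together). Since $(\bfU_{ij})_{\ell,\ell'}=0$ for $\ell>\ell'$ and $(\bfU_{ij})_{\ell,\ell}=1$, after this reordering the matrix is block upper triangular with $M$ diagonal blocks, each equal to the $N\times N$ Cauchy matrix $\bfC=(\gamma_{ij})$. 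Hence $\det(\bfA)=(\det\bfC)^{M}$, and the classical Cauchy determinant evaluation (with $x_i=\alpha_i$, $y_j=\beta_j$ in the $1/(x_i-y_j)$ convention; the indexing conventions for $\Delta$ match up to an overall sign that is immaterial here) gives $\det\bfC=\Delta$, so $\det(\bfA)=\Delta^{M}$.

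I expect the main obstacle to be Step 1: one must be careful that a finite-dimensional algebra consisting of nilpotent matrices is itself nilpotent and therefore has a common invariant complete flag over an \emph{arbitrary} field, and one should note that $\operatorname{char}\bbF=2$ is exactly what forces $\bfA_{ij}-\bfI_M$ to be nilpotent (over a field of odd characteristic these matrices are diagonalizable with eigenvalues $\pm1$ and the argument collapses). A conceptually equivalent, perhaps cleaner, alternative is to view the block matrix as an $N\times N$ matrix $\widetilde\bfA$ over the commutative local $\bbF$-algebra $R=\bbF[\cG]\cong\bbF[y_1,\dots,y_m]/(y_1^2,\dots,y_m^2)$ obtained by replacing each block by the corresponding element of $R$, invoke the transitivity identity $\det_\bbF(\bfA)=N_{R/\bbF}(\det_R\widetilde\bfA)$, observe that $\det_R\widetilde\bfA\equiv\det\bfC=\Delta$ modulo the (nilpotent) maximal ideal $\mathfrak{m}$, and note that the norm of any element congruent to $\Delta$ modulo $\mathfrak{m}$ equals $\Delta^{M}$ since multiplication by it is $\Delta\bfI+(\text{nilpotent})$; the only point needing care there is the transitivity identity, which follows by factoring a generic invertible matrix over $\operatorname{Frac}(R)$ into elementary and diagonal matrices.
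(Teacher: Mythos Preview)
Your proof is correct, and it takes a genuinely different route from the paper's.

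The paper proceeds via the Kovacs block-determinant identity for pairwise-commuting blocks,
\[
\det(\bfA)=\det\!\Bigg(\sum_{\pi\in S_N}\prod_{i=1}^N\gamma_{i,\pi(i)}\,\bfA_{i,\pi(i)}\Bigg),
\]
and then proves separately, by induction on $m$, that $\det\!\big(\sum_{\bfu}\alpha_\bfu\bfP^{(m)}_\bfu\big)=\big(\sum_\bfu\alpha_\bfu\big)^{2^m}$ over characteristic two. Combining these two facts with the scalar Cauchy evaluation yields $\Delta^M$.

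Your argument replaces both of those pieces by the single structural observation that, in characteristic two, each $\bfA_{ij}$ is \emph{unipotent} (because $(\bfA_{ij}-\bfI_M)^2=0$), and the commuting unipotents can be simultaneously upper-triangularized with $1$'s on the diagonal. After the $(\ell,i)$-reordering you get an honest block upper-triangular matrix with $M$ copies of the scalar Cauchy matrix on the diagonal, and you are done. Your group-algebra alternative is the same idea in algebraic dress: $\bbF[\bbZ_2^m]\cong\bbF[y_1,\dots,y_m]/(y_1^2,\dots,y_m^2)$ is local with nilpotent maximal ideal precisely because $\operatorname{char}\bbF=2$, so the $R$-determinant reduces modulo $\mathfrak m$ to the scalar Cauchy determinant and the norm is $\Delta^M$.

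What each approach buys: the paper's route is elementary and entirely self-contained (no flag or nil-algebra facts), at the price of an ad~hoc induction on $m$. Your route isolates exactly where characteristic two enters (forcing unipotence rather than semisimplicity with eigenvalues $\pm1$), avoids the inductive lemma altogether, and makes transparent why the answer depends only on the \emph{scalar} Cauchy determinant; it also generalizes immediately to any commuting family of unipotent $M\times M$ blocks, not just those arising from $\bbZ_2^m$. The one point you flagged---that a finite-dimensional algebra of commuting nilpotents is nilpotent and hence admits a complete invariant flag over any field---is standard and your sketch of it is fine.
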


When $M=1$ (equivalently, when $\cG$ is trivial), this reduces to the classical Cauchy determinant formula in~\cite{cauchy1841exercices}. %\hanmao{Wenqin, please be careful! The citation reference is NOT a book and has no chapters! Also, there is no formula in Corollary~4.3. It will be better to cite directly from Cauchy... Cauchy, Augustin-Louis (1841). Exercices d'analyse et de physique mathématique. Vol. 2 (in French). Bachelier.}
The proof of Lemma~\ref{lem:cauchy-det} is technical, so we defer it to Appendix~A and proceed to apply Lemma~\ref{lem:cauchy-det} to prove Theorem~\ref{thm:main}.

\begin{proof}[Proof of Theorem~\ref{thm:main}]
To prove that the array code $\cC$ is MDS, it suffices to show that the original data can be reconstructed from any $k$ nodes; equivalently, every $k\times k$ block submatrix of $\bfG^\top$ must be full rank. Any such selection corresponds to a $k\times k$ block submatrix of $\bfG$, which may include both systematic and parity nodes. If all selected nodes are systematic, then the corresponding submatrix is block diagonal with identity blocks and hence invertible.

Otherwise, suppose the selection includes $t$ systematic nodes and $k-t$ parity-check nodes. Let $\{i_1,\ldots,i_t\}\subseteq [k]$ denote the indices of the selected systematic nodes, and let $\{j_1,\ldots,j_{\,k-t}\}\subseteq [p]$ denote the indices of the selected parity nodes. Define the set
$S \triangleq [k]\setminus\{i_1,\ldots,i_t\}=\{s_1,\ldots,s_{\,k-t}\}$.
After column and row permutations, %and elementary row operations, 
the selected $k\times k$ block submatrix can be reduced to the form
\begin{equation*}
\begin{bmatrix}
    \bfI_{tM\times tM} & \bf0_{tM\times (k-t)M} \\
    \bf0_{(k-t)M\times tM} & \bfA_{(k-t)M\times (k-t)M}
\end{bmatrix}
=
\begin{bmatrix}
    \bfI_M & \bf0   & \cdots & \bf0   & \bf0   & \bf0   & \cdots & \bf0 \\
    \bf0   & \bfI_M & \cdots & \bf0   & \bf0   & \bf0   & \cdots & \bf0 \\
    \vdots & \vdots & \ddots & \vdots & \vdots & \vdots & \ddots & \vdots \\
    \bf0   & \bf0   & \cdots & \bfI_M & \bf0   & \bf0   & \cdots & \bf0 \\
    \bf0   & \bf0   & \cdots & \bf0   &
      \gamma_{j_1 s_1}\bfA_{j_1 s_1} & \gamma_{j_1 s_2}\bfA_{j_1 s_2} & \cdots & \gamma_{j_1 s_{k-t}}\bfA_{j_1 s_{k-t}} \\
    \bf0   & \bf0   & \cdots & \bf0   &
      \gamma_{j_2 s_1}\bfA_{j_2 s_1} & \gamma_{j_2 s_2}\bfA_{j_2 s_2} & \cdots & \gamma_{j_2 s_{k-t}}\bfA_{j_2 s_{k-t}} \\
    \vdots & \vdots & \ddots & \vdots &
      \vdots & \vdots & \ddots & \vdots \\
    \bf0   & \bf0   & \cdots & \bf0   &
      \gamma_{j_{k-t} s_1}\bfA_{j_{k-t} s_1} &
      \gamma_{j_{k-t} s_2}\bfA_{j_{k-t} s_2} & \cdots &
      \gamma_{j_{k-t} s_{k-t}}\bfA_{j_{k-t} s_{k-t}}
\end{bmatrix}.
\end{equation*}
By Lemma~\ref{lem:cauchy-det}, since the parameters $\{\alpha_{j_1},\ldots,\alpha_{j_{k-t}},\beta_{s_1},\ldots,\beta_{s_{k-t}}\}$ are chosen as distinct elements of the field, we have
\begin{equation*}
\det(\bfA)
=\left(
\frac{
      \displaystyle\prod_{1\le x<y\le k-t}(\alpha_{j_y}-\alpha_{j_x})(\beta_{s_x}-\beta_{s_y})
    }{
      \displaystyle\prod_{x=1}^{k-t}\prod_{y=1}^{k-t}(\alpha_{j_x}-\beta_{s_y})
    }\right)^M\neq 0 \,.
\end{equation*}
%\hanmao{$\det(A) = (something)^M$ right?}
This implies that the determinant of the full $k\times k$ block submatrix is
$\det(\bfI_{tM\times tM})\cdot \det(\bfA)=\det(\bfA)\neq 0$.
Thus, every $k\times k$ block submatrix of $\bfG^\top$ is full rank.
\end{proof}

To illustrate Theorem~\ref{thm:main}, we give an explicit generator matrix for a $(7,4,4)$ array code in Example~\ref{ex:1} and verify that it satisfies the MDS property.

\begin{example}\label{ex:2}
Consider the $(7,4,4)$ array code $\cC$ in Example~\ref{ex:1}. Let $\bfm=(\bfa^{(1)},\bfa^{(2)},\bfa^{(3)},\bfa^{(4)})^\top$ be the information column. The array codeword matrix $\bfC\in\bbF_8^{7\times 4}$ is obtained from
$\bfC=\bfG^\top \bfm$,
where $\bfG^\top$ is a $28\times 16$ generator matrix and the output is reshaped as a $7\times 4$ array with one row per storage node. Applying Theorem~\ref{thm:main}, we consider $\bbF_8=\{0,1,\omega,\omega^2,\omega^3,\omega^4,\omega^5,\omega^6\}$, where $\omega$ is a root of $x^3+x+1$. Set $\alpha_1=0$, $\alpha_2=1$, $\alpha_3=\omega$, and $\beta_1=\omega^2$, $\beta_2=\omega^3$, $\beta_3=\omega^4$, $\beta_4=\omega^5$. We then define the Cauchy coefficients $\gamma_{ij}=(\alpha_i-\beta_j)^{-1}$ for $i\in[3]$, $j\in[4]$. With these choices, the explicit generator matrix $\bfG^\top$ becomes
\begin{equation*}
\bfG^\top =
\begin{bmatrix}
\bfI_4 & \bf0   & \bf0   & \bf0 \\
\bf0   & \bfI_4 & \bf0   & \bf0 \\
\bf0   & \bf0   & \bfI_4 & \bf0 \\
\bf0   & \bf0   & \bf0   & \bfI_4 \\
\hline
\gamma_{11}\bfA_{11} & \gamma_{12}\bfA_{12} & \gamma_{13}\bfA_{13} & \gamma_{14}\bfA_{14} \\
\gamma_{21}\bfA_{21} & \gamma_{22}\bfA_{22} & \gamma_{23}\bfA_{23} & \gamma_{24}\bfA_{24} \\
\gamma_{31}\bfA_{31} & \gamma_{32}\bfA_{32} & \gamma_{33}\bfA_{33} & \gamma_{34}\bfA_{34}
\end{bmatrix}
=
\begin{bmatrix}
\bfI_4 & \bf0   & \bf0   & \bf0 \\
\bf0   & \bfI_4 & \bf0   & \bf0 \\
\bf0   & \bf0   & \bfI_4 & \bf0 \\
\bf0   & \bf0   & \bf0   & \bfI_4 \\
\hline
\omega^5\bfA_{11} & \omega^4\bfA_{12} & \omega^3\bfA_{13} & \omega^2\bfA_{14} \\
\omega   \bfA_{21} & \omega^6\bfA_{22} & \omega^2\bfA_{23} & \omega^3\bfA_{24} \\
\omega^3\bfA_{31} & \bfA_{32}          & \omega^5\bfA_{33} & \omega   \bfA_{34}
\end{bmatrix},
\end{equation*}
where each $\bfA_{ij}$ is a $4\times 4$ permutation matrix by the permutation labels. In this case, we have $\bfA_{11}=\bfA_{12}=\bfA_{13}=\bfA_{14}=\bfA_{23}=\bfA_{24}=\bfA_{31}=\bfA_{32}=\bfI_4$. The remaining nontrivial permutation blocks are
\[
\bfA_{21}=\bfA_{33}=
\begin{bmatrix}
0&0&0&1\\
0&0&1&0\\
0&1&0&0\\
1&0&0&0
\end{bmatrix},
\qquad
\bfA_{22}=\bfA_{34}=
\begin{bmatrix}
0&1&0&0\\
1&0&0&0\\
0&0&0&1\\
0&0&1&0
\end{bmatrix}.
\]
With this choice, a direct check in SageMath confirms that any $4$ out of $7$ rows of $\bfG^\top$ yield a $4\times 4$ block matrix (i.e., $16\times 16$) of full rank, hence the array code is MDS.
\end{example}

\section{Zigzag codes with Low Skip Cost}
\label{sec:skipcost}

We now take stock of the development so far. 
In Section~\ref{sec:zigzag}, we recalled the zigzag framework of Tamo–Wang–Bruck and explained why it achieves optimal access and rebuilding ratio. 
In Section~\ref{sec:MDSproperty}, we established that when the row-index group is $\cG=\bbZ_2^m$, one can assign explicit MDS coefficients over a field of linear size. 
Having established these properties, this section focuses on further reducing access latency by ensuring that reads from each helper node are contiguous.

Compared with our earlier work~\cite{chee2024repairing}, our construction of zigzag codes with zero skip cost in this work achieves two notable improvements. 
First, the asymptotic code rate increases from $1/2$ to $2/3$. 
Second, the information length $k$ and the subpacketization $M=2^m$ can now be chosen independently, whereas in~\cite{chee2024repairing} the subpacketization $M$ was dependent on $k$. 

To obtain rates exceeding $2/3$, we incur a small sacrifice in skip cost. 
As we shall see in explicit constructions later in this section, even with this tradeoff, the resulting skip cost remains modest, and the achieved code rates are comparable to those of MDS array codes deployed in practice. A summary is given in Table~\ref{tab:parameters}.

Before presenting the detailed constructions, we outline the general strategy at a high level. 
The main idea is to fix an ordering $\tau$ of the row-index group $\cG$, select a small collection of index-two subgroups $H_1,\ldots,H_t$, and then assign permutation labels to parity nodes so that the repair of any systematic node involves downloading from a single coset of some $H_i$. 
The skip cost is then controlled by the choice of $\tau$ and the structure of the $H_i$. These choices, through total reads, then determine the overall RFR of the repair scheme. We formalize this in the following definition and lemma.

\begin{comment}
\hanmao{

To change...

STORY to include:
\begin{itemize}
\item How this improves our ISIT work.
\begin{quote}
Compared with our earlier work~\cite{chee2024repairing}, this achieves two improvements: (i) the code rate increases from $1/2$ to $2/3$, and (ii) the information length $k$ and the subpacketization $M=2^m$ can be chosen independently (i.e., $M$ is no longer dependent on $k$).
\end{quote}

\item To obtain rates greater than $2/3$, we sacrifice a bit of skip cost. 

\item Before we describe the actual constructions, we provide a high-level description of our construction.
\end{itemize}

}
\end{comment}

\begin{definition}
Let $\cG$ be a row-index group of size $M$ together with an ordering $\tau$, so that $\cG = \{g_i\}_{i\in[M]}$. 
For a subset $H = \{g_{i_1}, \ldots, g_{i_t}\}$ with $i_1 < \cdots < i_t$, we define the \emph{skip cost of $H$ with respect to $\tau$} as
\[
\cost(H;\tau) \triangleq i_t - i_1 - (t-1).
\]  
If $H$ is a subgroup of index two, we further define
\[
c(H;\tau) \triangleq \min\{\,\cost(H;\tau),\; \cost(\cG \setminus H;\tau)\,\}.
\]
\end{definition}

\begin{lemma}\label{lem:construction}
Let $t,s \ge 1$ and let $\cG$ be a row-index group of size $M$. 
Suppose $\tau$ is an ordering of $\cG$, and let $H_1,\ldots,H_t$ be index-two subgroups of $\cG$. 
Assume that for every $i \in [t]$:
\begin{enumerate}[(T1)]
    \item there exists $\bfh_i \in \Bigl(\bigcap_{j \ne i} H_j\Bigr) \setminus H_i$; and
    \item $c(H_i;\tau) \le \sigma$,
\end{enumerate}
for some integer $\sigma \ge 0$.  
Then there exists a $(N = st+s+1,\; k=st,\; M)$  zigzag code whose skip cost is at most $\sigma(k+1)$.
\end{lemma}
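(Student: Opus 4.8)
The plan is to construct the code explicitly and then verify its two required properties: the MDS property (via Theorem~\ref{thm:main} applied with row-index group $\cG=\bbZ_2^m$, since we may take $M=2^m$) and the skip-cost bound (via Lemma~\ref{lem:recovery} and the hypotheses (T1), (T2)). First I would set up the indexing: we have $k=st$ systematic nodes, which I would group into $t$ blocks of size $s$, writing systematic node indices as pairs $(i,\ell)$ with $i\in[t]$ and $\ell\in[s]$. There are $p = N-k = s+1$ parity nodes, which I would label $p_0, p_1, \ldots, p_s$. The parity node $p_0$ gets the trivial permutation label $\bfS_0 = (\vzero,\ldots,\vzero)$ (the ``plain sum'' parity). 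For $r\in[s]$, the parity node $p_r$ gets a permutation label $\bfS_r$ defined as follows: for the systematic node $(i,\ell)$, we place the group element $\bfh_i$ if $\ell = r$, and $\vzero$ otherwise. In other words, $p_r$ ``twists'' exactly the $r$-th systematic node in each of the $t$ blocks, using the block-dependent shift $\bfh_i$ guaranteed by (T1).

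Next I would describe the repair scheme and invoke the Recovery Lemma. Suppose systematic node $(i,\ell)$ fails. We repair it using parity nodes $p_0$ and $p_\ell$ together with the other $k-1$ systematic nodes. Comparing labels $\bfS_0$ (all $\vzero$) and $\bfS_\ell$: in coordinate $(i,\ell)$ the pair is $\{\vzero,\bfh_i\}$, and since $\bfh_i\notin H_i$ by (T1), this pair is a complete set of coset representatives of $H_i$. In coordinate $(i',\ell)$ for $i'\ne i$ the pair is $\{\vzero,\bfh_{i'}\}$; here $\bfh_{i'}\in H_i$ because $\bfh_{i'}\in\bigcap_{j\ne i'}H_j$ and $i\ne i'$ means $i$ is among those indices $j$. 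So this pair lies in the single coset $H_i$. In all coordinates $(i'',\ell'')$ with $\ell''\ne\ell$ the pair is $\{\vzero,\vzero\}\subseteq H_i$. Hence the hypotheses of Lemma~\ref{lem:recovery}, Case~1, hold with the subgroup $H=H_i$, so node $(i,\ell)$ is recovered by downloading $|H_i| = M/2$ symbols from each of the $k-1$ surviving systematic nodes and from $p_0$ and $p_\ell$. By the Remark following Lemma~\ref{lem:recovery}, we are free to download from either the coset $H_i$ or the coset $\cG\setminus H_i$; we choose whichever coset has the smaller skip cost with respect to $\tau$, which by definition is $c(H_i;\tau)$.

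Then I would tally the skip cost. For the repair of node $(i,\ell)$, each of the $k-1$ surviving systematic helpers and each of the two parity helpers $p_0,p_\ell$ contributes skip cost exactly $c(H_i;\tau) \le \sigma$ by (T2) --- here I should note that the reads from the parity nodes are indexed by the same coset of $\cG$ (as rows of the array), so the skip cost of each parity read equals the skip cost of that coset, namely $c(H_i;\tau)$, since the ordering $\tau$ applies uniformly to all columns of the array. Summing over the $k-1$ systematic helpers plus $2$ parity helpers gives total skip cost at most $(k+1)\sigma$. Taking the maximum over all choices of failed systematic node $(i,\ell)$ still gives the bound $\sigma(k+1)$, as claimed. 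Finally, the MDS property: we set $M = 2^m$ for any $m$ with $2^m = M$, use $\cG = \bbZ_2^m$, and apply Theorem~\ref{thm:main} with any field $\bbF$ of characteristic two and $|\bbF|\ge N = st+s+1$, assigning Cauchy-type coefficients to the permutation matrices $\bfA_{ij}$ determined by the labels $\bfS_0,\ldots,\bfS_s$. This yields a genuine $(N,k,M)$ MDS array code with the stated skip cost.

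The main obstacle I anticipate is verifying that condition (T1) is used in exactly the right way to make Case~1 of Lemma~\ref{lem:recovery} applicable for \emph{every} failed systematic node simultaneously --- that is, checking that the single element $\bfh_i$ plays the dual role of (a) being outside $H_i$ so that the failed-coordinate pair is a full set of coset representatives, and (b) lying inside $H_{i'}$ for all $i'\ne i$ so that the off-block pairs stay within one coset of $H_i$. This is precisely what the intersection-and-complement structure $\bfh_i\in\bigl(\bigcap_{j\ne i}H_j\bigr)\setminus H_i$ delivers, but it must be spelled out. A secondary point requiring care is confirming that ``skip cost of a helper's read'' is governed solely by which coset of $\cG$ is read (independent of the column), so that the per-helper cost is uniformly $c(H_i;\tau)$; this follows because every column of the array is indexed by $\cG$ under the same ordering $\tau$, and the recovery lemma downloads a fixed coset's worth of rows from each helper.
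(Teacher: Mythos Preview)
Your proposal is correct and follows essentially the same approach as the paper: trivial label $\bfS_0$, one additional parity per value $r\in[s]$ that places $\bfh_i$ in the coordinates with second index $r$, then repair of node $(i,\ell)$ using $p_0,p_\ell$ and Case~1 of Lemma~\ref{lem:recovery} with subgroup $H_i$, downloading from whichever coset attains $c(H_i;\tau)$. The correctness check via (T1) and the skip-cost tally of $(k+1)\sigma$ match the paper exactly, up to the cosmetic difference that the paper describes the partition as $s$ blocks of length $t$ while you describe it as $t$ blocks of size $s$; the resulting labels are identical.

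One minor point: the lemma only asserts the existence of a \emph{zigzag} code with the stated skip cost, not an MDS code, and it is stated for a general row-index group $\cG$ of size $M$. Your detour through Theorem~\ref{thm:main} (which forces $\cG=\bbZ_2^m$) is therefore not needed for this lemma and slightly overreaches the hypothesis; the paper's proof omits any MDS discussion here and leaves that to Section~\ref{sec:MDSproperty}.
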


\begin{proof}
Set $p = s+1$ and $k = st$. 
Within the zigzag code framework (Section~\ref{sec:zigzag}), it suffices to specify the $p$ permutation labels 
$\bfS_0,\bfS_1,\ldots,\bfS_s \in \cG^{\,k}$ for the parity nodes and then give, for each failed systematic node, a repair scheme that downloads $M/2$ symbols per helper, with the desired skip cost.

\noindent{\em Permutation labels}. 
Set $\bfS_0 = (\vzero,\ldots,\vzero)$. 
For the other parities, we partition the $k$ systematic indices into $s$ consecutive blocks of length $t$:
$B_j = \{(j-1)t+1,\,(j-1)t+2,\,\ldots,\,jt\}$  for $j\in[s]$.
We also fix the elements $\bfh_1,\ldots,\bfh_t$ from (T1).
Then for $j\in[s]$, we place $(\bfh_1,\ldots,\bfh_t)$ in the block $B_j$ of $\bfS_j$ and $\vzero$ outside $B_j$. In other words, $\bfS_1 = (\bfh_1,\ldots, \bfh_t,\vzero,\ldots,\vzero)$ and subsequent $\bfS_j$'s are cyclic shifts of $\bfS_1$ by $(j-1)t$ positions.

\noindent{\em Repair scheme}.
Fix a failed systematic node whose index is the $i$-th position inside block $j$; we denote it by $(i,j)$ with 
$(i,j)\in [t]\times[s]$. 
Let $H_i^*$ be whichever of $H_i$ or its coset $\cG\setminus H_i$ that attains $c(H_i;\tau)$. So, $|H_i^*|=M/2$ and $\cost(H_i^*;\tau)=c(H_i;\tau)$.
Then to repair the systematic node $(i,j)$, we contact the helper set
\[
\cH \bigl((i,j)\bigr) \;=\; 
\bigl\{\bfa^{(i',j')} \; : \; (i',j')\ne (i,j) \bigr \}\;\cup\;
\bigl\{\,\bfp^{(0)}, \bfp^{(j)}\,\bigr\},
\]
and each helper in $\cH((i,j))$ returns exactly the symbols in rows indexed by $H_i^*$ (i.e., $M/2$ symbols).

\begin{itemize}
\item {\em Correctness}. We apply the Recovery Lemma (Lemma~\ref{lem:recovery}) and look at the labels $\bfS_0$ and $\bfS_j$.
First, at all positions outside $(i,j)$, we either have $\vzero$ or $\bfh_{i'}$ with $i'\ne i$. In both cases, $\vzero$ and $\bfh_{i'}$ belong to $H_i$ (since $i\ne i'$ implies that $\bfh_{i'}\in H_i$ by Condition (T1)) and so, Condition (R2) in Lemma~\ref{lem:recovery} is met. 
On the other hand, at position $(i,j)$, the corresponding element is $\bfh_i$ which does not belong to $H_i$ and so, $\{\vzero, \bfh_i\}$ is a complete set of coset representatives. Therefore, Condition (R1) is met and downloading the rows in $H_i^*$ from the helpers in $\cH((i,j))$ enables recovery.
\item {\em Skip cost}. Every helper (the $k-1$ systematic nodes and the two parities) reads from the same row set $H_i^*$; hence each helper incurs skip cost $\cost(H_i^*;\tau)=c(H_i;\tau)\le \sigma$ (Condition (T2)).
There are $(k-1)+2=k+1$ helpers, so the total skip cost is at most $(k+1)\sigma$.
\end{itemize}

Therefore, the resulting zigzag code has parameters $N=st+s+1$, $k=st$, $M$ and admits a repair scheme with total skip cost at most $\sigma(k+1)$, as claimed.
\end{proof}

\begin{comment}
\begin{construction}
Fix $s\ge 1$ and consider $\cG=\bbZ_2^m$ with $M=2^m$.
In what follows, we denote an element of $\cG$ with $(x_1,\ldots, x_m)$.
\begin{enumerate}[(i)]
\item Let $M\ge 4$ and so, $m\ge 2$. Also, set $t=2$.
\begin{itemize}
\item Let the row ordering $\tau$ be the usual lexicographic ordering.
\item Set the subgroups $H_1 \triangleq \{x_1=0\}$ and $H_2 \triangleq \{x_1=x_2\}$.
\item Choose $\vone \in H_2\setminus H_1$ and $\bfe_2 \in H_1\setminus H_2$ to satisfy (T1).
\item We check that 
\[c(H_1;\tau)=\cost(H_1;\tau) =c(H_2;\tau)=\cost(\cG\setminus H_2;\tau)=0\,.\]
\item Hence, we have a $(3s+1, 2s, M)$ zigzag code with zero skip cost. As $s\to \infty$, the code rates approach $2/3$.
\end{itemize}

\item Let $M = 8$ and so, $m = 3$. Also, set $t=3$.
\begin{itemize}
\item Let the row ordering $\tau$ be $000,001,010,011,100,110,101,111$. 
\item Set the subgroups $H_1 \triangleq \{x_1=0\}$, $H_2 \triangleq \{x_1=x_3\}$ and $H_3\triangleq \{x_1=x_2\}$.
\item Choose $\bfh_1=(1,1,1)$, $\bfh_2=(0,0,1)$, $\bfh_3=(0,1,0)$ and verify that they satisfy (T1).
\item We check that 
\begin{align*}
    c(H_1;\tau) & =\cost(H_1;\tau) = 0,\\
    c(H_2;\tau) & =\cost(\cG\setminus H_2;\tau) = 1,\\
    c(H_3;\tau) & =\cost(\cG\setminus H_3;\tau) = 1.
\end{align*}
\item Hence, we have a $(4s+1, 3s, M)$ zigzag code with skip cost $(k+1)$. As $s\to \infty$, the code rates approach $3/4$.
\end{itemize}

\end{enumerate}
\end{construction}
\end{comment}

To apply Lemma~\ref{lem:construction}, it remains to identify an ordering $\tau$ of $\cG$ together with index-two subgroups $H_1,\ldots,H_t$ such that $c(H_i;\tau)$ are small. 
When $t=2$, we find that the lexicographic ordering already suffices: there exist two subgroups $H_1,H_2$ with $c(H_1;\tau)=c(H_2;\tau)=0$. 
Hence, we obtain zero-skip cost zigzag codes whose rates approach $2/3$ (see Construction~A(i)).

For $t \ge 3$, the situation changes. 
We performed an exhaustive search for $M \in \{4,8\}$ and were unable to find any ordering $\tau$ and subgroups $H_1,H_2,H_3$ of $\bbZ_2^2$ or $\bbZ_2^3$ such that $c(H_i;\tau)=0$ for all $i$. 
We believe this is impossible more generally, which leads us to the following conjecture.

\begin{conjecture}
For every $t \ge 3$ and every $M=2^m$, there does not exist an ordering $\tau$ of $\bbZ_2^m$ together with index-two subgroups $H_1,\ldots,H_t$ such that $c(H_i;\tau)=0$ for all $i \in [t]$.
\end{conjecture}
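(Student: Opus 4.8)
The plan is to recast the statement as a question about Hamiltonian cycles on $\bbZ_2^m$ and then rule out three simultaneous ``interval‑forming'' hyperplanes by a counting argument; throughout I treat $H_1,\dots,H_t$ as \emph{distinct} (which is what the application in Lemma~\ref{lem:construction} requires). First I would unpack the condition $c(H;\tau)=0$: viewing the ordered group $\cG=\{g_1,\dots,g_M\}$ as a cycle whose positions are read modulo $M$, $\cost(H;\tau)=0$ says $H$ occupies $M/2$ cyclically consecutive positions and $\cost(\cG\setminus H;\tau)=0$ says $\cG\setminus H$ does; either way $H$ is a \emph{half-arc}, i.e.\ a length-$M/2$ arc on the cycle. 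So the claim to prove is that, for any $\tau$, at most two distinct index-two subgroups are half-arcs. I would then pass to the consecutive-difference sequence $d_j:=g_{j+1}-g_j$ (indices mod $M$), which satisfies $\sum_j d_j=0$ and whose partial sums exhaust $\cG$. For a hyperplane $H=\ker\ell$ with $\ell\colon\bbZ_2^m\to\bbZ_2$ nonzero, $\ell(g_j)$ flips exactly at the positions in $A_\ell:=\{j:\ell(d_j)=1\}$; since $\ell$ is surjective and $\sum_j\ell(d_j)=\ell(0)=0$, the set $A_\ell$ has positive even size, and $H$ is a half-arc exactly when $|A_\ell|=2$. A short run-length count then shows that in the half-arc case the two runs of $(\ell(g_j))_j$ each have length $|H|=M/2$, so the two elements of $A_\ell$ are antipodal on the cycle: $A_\ell=\{p,\,p+M/2\}$ for some $p$.

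Next I would prove disjointness: if $H_1=\ker\ell_1$ and $H_2=\ker\ell_2$ are distinct half-arcs then $A_{\ell_1}\neq A_{\ell_2}$, for $A_{\ell_1}=A_{\ell_2}$ would force $\ell_1(d_j)=\ell_2(d_j)$ for all $j$, hence every $d_j$ into the proper subgroup $\ker(\ell_1+\ell_2)$ and all partial sums into a single coset --- contradicting that they exhaust $\cG$; and two \emph{distinct} antipodal pairs on a cycle are automatically disjoint (sharing one point forces sharing its antipode). Now suppose $H_1,H_2,H_3$ are three distinct half-arc hyperplanes with functionals $\ell_1,\ell_2,\ell_3$. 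Being distinct and nonzero, either $\ell_1+\ell_2+\ell_3=0$ (the only dependence possible among three distinct nonzero vectors over $\bbZ_2$) or $\ell_1,\ell_2,\ell_3$ are linearly independent. In the first case $A_{\ell_3}=A_{\ell_1}\triangle A_{\ell_2}=A_{\ell_1}\cup A_{\ell_2}$ (a disjoint union, by the previous step), which has size $4$, contradicting $|A_{\ell_3}|=2$. In the second case $\Phi:=(\ell_1,\ell_2,\ell_3)\colon\bbZ_2^m\to\bbZ_2^3$ is surjective, so $\Phi(g_j)$ attains all $8$ values of $\bbZ_2^3$ as $g_j$ ranges over $\cG$; yet along the cycle $\Phi(g_j)$ changes only at the $\le 6$ positions of $A_{\ell_1}\cup A_{\ell_2}\cup A_{\ell_3}$, so the sequence $(\Phi(g_j))_j$ has at most $6$ runs and attains at most $6$ distinct values --- a contradiction. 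Hence $t\le 2$, which is the conjecture.

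The hard part is really the reformulation: seeing that ``$c(H;\tau)=0$'' is a purely cyclic condition on a hyperplane, and then choosing the right bookkeeping --- the consecutive-difference sequence $(d_j)$ --- so that the antipodal structure of $A_\ell$ and the disjointness of distinct $A_\ell$'s become visible. Once that framework is in place, the remaining case analysis (dependence versus independence of the three functionals) is elementary. One caveat to flag is the precise reading of the conjecture: distinctness of the $H_i$ must be assumed, since a single index-two subgroup is a half-arc for a suitable $\tau$; under that reading, the argument above appears to settle the conjecture affirmatively.
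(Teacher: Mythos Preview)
The paper does not prove this statement; it is offered as a conjecture, supported only by exhaustive search for $M\in\{4,8\}$. Your proposal, by contrast, is a complete and correct proof (under the natural reading that the $H_i$ are pairwise distinct, which you rightly flag and which is forced by condition~(T1) of Lemma~\ref{lem:construction}).

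The reformulation is clean and is where the real content lies: closing $\tau$ to a cycle identifies $c(H;\tau)=0$ with $H$ being a half-arc, and the consecutive-difference sequence $(d_j)$ turns this into the condition $|A_\ell|=2$ on the flip set $A_\ell=\{j:\ell(d_j)=1\}$. The subsequent steps all check out: the two flips of a half-arc are antipodal because both runs have length $M/2$; distinct half-arc hyperplanes give distinct (hence disjoint) antipodal pairs, since $A_{\ell_1}=A_{\ell_2}$ would trap every $d_j$ in $\ker(\ell_1+\ell_2)$ and confine the partial sums to a proper coset; and the dichotomy $\ell_1+\ell_2+\ell_3=0$ versus independence is exhaustive over $\bbZ_2$. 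In the dependent case $A_{\ell_3}=A_{\ell_1}\sqcup A_{\ell_2}$ has size $4$, and in the independent case the surjection $\Phi=(\ell_1,\ell_2,\ell_3)$ onto $\bbZ_2^3$ must hit $8$ values with only $6$ cyclic runs, which is impossible. One small point worth making explicit in a write-up is that the independence case tacitly needs $m\ge 3$, but for $m\le 2$ any three distinct nonzero functionals are automatically dependent, so the case split is still exhaustive.

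In short, you have gone beyond the paper here: where the authors leave this as an open conjecture based on numerical evidence, your argument settles it affirmatively.
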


Nevertheless, for $t \in \{3,4\}$ and $M \in \{8,16\}$, we carried out a search and obtained explicit families where each $c(H_i;\tau)$ is small (though nonzero). 
These lead to constructions with bounded skip cost, which we present in the following theorem.

\begin{theorem}[Construction A]
Fix $s \ge 1$ and consider $\cG = \bbZ_2^m$ with $M=2^m$. 
With suitable choices of the ordering $\tau$ and subgroups $H_i$, the following families of zigzag codes exist:
\begin{enumerate}[(i)]
    \item \textbf{Case $t=2$, $m \ge 2$.}  
    Let $\tau$ be the usual lexicographic order.  
    Define
    \[
    H_1 = \{x_1=0\}, \qquad H_2 = \{x_1=x_2\}.
    \]
    Then there exists a $(3s+1,\,2s,\,M)$ zigzag code with zero skip cost.  
    As $s \to \infty$, the rate approaches $2/3$.
    
    \item \textbf{Case $t=3$, $m=3$ ($M=8$).}  
    Let $\tau$ be the order $000,001,010,011,100,110,101,111$.
    %\[
    %000 \prec 001 \prec 010 \prec 011 \prec 100 \prec 110 \prec 101 \prec 111.
    %\]
    Define
    \[
    H_1 = \{x_1=0\}, \qquad 
    H_2 = \{x_1=x_3\}, \qquad
    H_3 = \{x_1=x_2\}.
    \]
    Then there exists a $(4s+1,\,3s,\,M)$ zigzag code whose skip cost is at most $(k+1)$.  
    As $s \to \infty$, the rate approaches $3/4$.

        \item \textbf{Case $t=4$, $m=4$ ($M=16$).}  
    Let $\tau$ be the row order
    $1000,\,1010,\,1100,\,1110,\,0101,\,0011,\,1111,\,1011,0100,\,0000,\,0001$, $0010,\,1101,\,0110,\,0111,\,1001.$
    Define
    \[
    H_1=\{ x_1 = x_2 \}, \quad
    H_2=\{ x_1 = x_3 \}, \quad
    H_3=\{ x_1 = x_4\}, \quad
    H_4=\{ x_1 = 0\}.
    \]
    Then there exists a $(5s+1,\,4s,\,M)$ zigzag code with skip cost at most $3(k+1)$.  
    As $s \to \infty$, the rate approaches $4/5$.

\end{enumerate}
\end{theorem}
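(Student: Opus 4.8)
The plan is to verify, for each of the three cases, that the stated ordering $\tau$ and subgroups $H_1,\ldots,H_t$ satisfy the hypotheses (T1) and (T2) of Lemma~\ref{lem:construction}, and then simply invoke that lemma. The skeleton is identical in all three cases: first check that each $H_i$ is genuinely an index-two subgroup of $\bbZ_2^m$ (each is the kernel of a nonzero $\bbF_2$-linear functional, so this is immediate); second, exhibit the elements $\bfh_i$ witnessing (T1); third, compute $\cost(H_i;\tau)$ and $\cost(\cG\setminus H_i;\tau)$ under the prescribed ordering, take the minimum to get $c(H_i;\tau)$, and read off the maximal value $\sigma$; then Lemma~\ref{lem:construction} yields a $(st+s+1,\,st,\,M)$ zigzag code with skip cost at most $\sigma(k+1)$, and the rate $k/N = st/(st+s+1) \to t/(t+1)$ as $s\to\infty$.

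For Case~(i), with $H_1=\{x_1=0\}$, $H_2=\{x_1=x_2\}$ and $\tau$ lexicographic, I would take $\bfh_1 = \vone$ (which lies in $H_2\setminus H_1$) and $\bfh_2=\bfe_2$ (which lies in $H_1\setminus H_2$); condition (T1) for $t=2$ just asks $\bfh_i\in H_j\setminus H_i$ for $j\ne i$, so this is exactly what is needed. Under lexicographic order, $H_1=\{x_1=0\}$ is the first half of $\cG$, hence consecutive, giving $\cost(H_1;\tau)=0$; and $\cG\setminus H_2 = \{x_1\ne x_2\}$ — one checks directly (e.g.\ via the $m=2$ case and noting the pattern is preserved) that in lexicographic order this set occupies a contiguous block, so $\cost(\cG\setminus H_2;\tau)=0$. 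Thus $c(H_1;\tau)=c(H_2;\tau)=0$, i.e.\ $\sigma=0$, and Lemma~\ref{lem:construction} gives a $(3s+1,2s,M)$ zigzag code with zero skip cost.

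For Cases~(ii) and~(iii) the argument is the same bookkeeping but with the tailored orderings. For $t=3$, $M=8$: with $H_1=\{x_1=0\}$, $H_2=\{x_1=x_3\}$, $H_3=\{x_1=x_2\}$, I would pick $\bfh_1=(1,1,1)\in H_2\cap H_3\setminus H_1$, $\bfh_2=(0,0,1)\in H_1\cap H_3\setminus H_2$, $\bfh_3=(0,1,0)\in H_1\cap H_2\setminus H_3$ (each $\bfh_i$ must lie in $\bigcap_{j\ne i}H_j$ but not in $H_i$; verifying the three membership/non-membership conditions per element is a direct substitution into the linear functionals). Then one lists the eight elements in the order $000,001,010,011,100,110,101,111$, marks which lie in each $H_i$ (resp.\ its complement), and computes: $H_1$ is the first four entries, so $\cost(H_1;\tau)=0$; for $H_2,H_3$ one finds the complement coset lands in four positions spanning an index range that forces $\cost=1$. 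Hence $\sigma=1$, giving a $(4s+1,3s,M)$ code with skip cost at most $k+1$. For $t=4$, $M=16$: with $H_i=\{x_1=x_{i+1}\}$ for $i\in[3]$ and $H_4=\{x_1=0\}$, take $\bfh_i=\bfe_{i+1}$ for $i\in[3]$ (each flips exactly one of $x_2,x_3,x_4$, so it lies in all $H_j$, $j\ne i$, and not in $H_i$) and $\bfh_4=\vone$ (lies in every $H_i=\{x_1=x_{i+1}\}$ but not in $H_4$); again a routine functional check. Then, walking through the prescribed $16$-element ordering and tabulating memberships, one verifies $c(H_i;\tau)\le 3$ for each $i$, so $\sigma=3$ and Lemma~\ref{lem:construction} delivers the $(5s+1,4s,M)$ code with skip cost at most $3(k+1)$.

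The content of the proof is entirely in the ordering-and-subgroup bookkeeping for Cases~(ii) and~(iii): the orderings were found by computer search, so the ``main obstacle'' is not a conceptual difficulty but the need to present the membership tables cleanly and confirm the $c(H_i;\tau)$ values — in particular, for Case~(iii) one must check that the single clever ordering simultaneously keeps all four of $c(H_1;\tau),\ldots,c(H_4;\tau)$ at most $3$, rather than optimizing each subgroup separately. Everything else (subgroup structure, the $\bfh_i$'s, the rate limit, and the final skip-cost bound) follows mechanically from Lemma~\ref{lem:construction}. I would present each case as a short displayed table of the ordering annotated with the $H_i$-cosets, followed by one line each asserting the computed $c(H_i;\tau)$ and the resulting parameters.
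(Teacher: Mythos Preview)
Your proposal is correct and follows essentially the same approach as the paper: verify conditions (T1) and (T2) of Lemma~\ref{lem:construction} by exhibiting the elements $\bfh_i$ and computing $c(H_i;\tau)$ explicitly, then invoke the lemma. Your choices of $\bfh_i$ in all three cases coincide exactly with those in the paper (including $\bfh_i=\bfe_{i+1}$ and $\bfh_4=\vone$ for Case~(iii)), and your computed $\sigma$ values ($0$, $1$, $3$) match as well.
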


\begin{proof}
(i) Choose $\vone \in H_2 \setminus H_1$ and $\bfe_2 \in H_1 \setminus H_2$, so that (T1) holds.  
Under lexicographic order, $H_1$ and the coset $\cG\setminus H_2$ form contiguous blocks, hence
\[
c(H_1;\tau) = c(H_2;\tau) = 0.
\]
Thus, Lemma~\ref{lem:construction} applies with $\sigma=0$, yielding the claimed code parameters.

(ii) Choose $\bfh_1=(1,1,1)$, $\bfh_2=(0,0,1)$, $\bfh_3=(0,1,0)$, which satisfy (T1).  
Under the specified order $\tau$, we check that
\begin{align*}
    c(H_1;\tau) & =\cost(H_1;\tau) = 0,\\
    c(H_2;\tau) & =\cost(\cG\setminus H_2;\tau) = 1,\\
    c(H_3;\tau) & =\cost(\cG\setminus H_3;\tau) = 1.
\end{align*}
Hence, Lemma~\ref{lem:construction}  applies with $\sigma=1$, and the total skip cost is at most $(k+1)$. 

(iii) Choose $\bfh_1=(0,1,0,0)$, $\bfh_2=(0,0,1,0)$, $\bfh_3=(0,0,0,1)$, and $\bfh_4=(1,1,1,1)$, which satisfy (T1).  
Under the specified order $\tau$, we check that
\begin{align*}
    c(H_1;\tau) & =\cost( H_1;\tau) = 3,\\
    c(H_2;\tau) & =\cost( H_2;\tau) = 2,\\
    c(H_3;\tau) & =\cost( H_3;\tau) = 2,\\
    c(H_4;\tau) & =\cost( H_4;\tau) = 3.
\end{align*}
Hence, Lemma~\ref{lem:construction}  applies with $\sigma=3$, and the total skip cost is at most $3(k+1)$. 
\end{proof}

\begin{example}[Example~\ref{ex:1} continued]
Setting $t=2$, $s=2$, and $M=4$ in Construction~A(i), we obtain the permutation labels of the MDS array code in Example~\ref{ex:1}. 
In the repair scheme for each failed systematic node, the accessed rows are always drawn from either the subgroup $H_1=\{00,01\}$ or the coset $\cG \setminus H_2=\{01,10\}$, both of which form consecutive row blocks under the lexicographic ordering of $\cG$. 
Moreover, each systematic node is repaired by accessing exactly half of the rows from the remaining systematic nodes and the two parity nodes, achieving the optimal rebuilding ratio of $1/2$. 
Recall also from Example~\ref{ex:2} that over the field $\bbF_8$, one can choose coefficients that result in MDS property.
\end{example}

By shortening the zigzag codes obtained in Construction~A, that is, by removing certain systematic columns while keeping the parity structure unchanged, we obtain further families of zigzag codes. 
These shortened codes inherit the repair properties of the original construction, and the skip costs follow directly.

\begin{corollary}
Fix $s \ge 1$.
\begin{enumerate}[(i)]
    \item For $M \ge 4$ and $2s-1 \le k \le 2s$, there exists a $(k+s+1,\,k,\,M)$ zigzag code with zero skip cost.
    \item For $M = 8$ and $3s-2 \le k \le 3s$, there exists a $(k+s+1,\,k,\,M)$ zigzag code with skip cost at most $(k+1)$.
    \item For $M = 16$ and $4s-3 \le k \le 4s$, there exists a $(k+s+1,\,k,\,M)$ zigzag code with skip cost at most $3(k+1)$.
\end{enumerate}
\end{corollary}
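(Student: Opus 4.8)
The plan is to obtain each of the three families by \emph{shortening} the zigzag codes produced in Construction~A. Recall that Construction~A gives, for $t\in\{2,3,4\}$ (under $M\ge4$, $M=8$, $M=16$ respectively) and every $s\ge1$, a $(ts+s+1,\,ts,\,M)$ zigzag code with parity labels $\bfS_0,\bfS_1,\dots,\bfS_s\in\cG^{\,ts}$ built from $s$ consecutive length-$t$ blocks $B_1,\dots,B_s$ of systematic indices, for which Lemma~\ref{lem:construction} supplies a repair scheme: the failed systematic node at local position $i$ of block $B_j$ is rebuilt by contacting the two parities $\bfp^{(0)},\bfp^{(j)}$ together with the remaining $k-1$ systematic nodes, each helper returning exactly the $M/2$ rows indexed by $H_i^*$, and each incurring skip cost $c(H_i;\tau)\le\sigma$ with $\sigma=0,1,3$ in parts (i), (ii), (iii).

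Given $k$ in the stated range I would set $d=ts-k$, so that $0\le d\le t-1$, and delete $d$ systematic column blocks from this code, all chosen from the last block $B_s$; then $B_1,\dots,B_{s-1}$ remain full length-$t$ blocks and $B_s$ retains $t-d\ge1$ positions. The result is a zigzag code over the same row-index group $\cG=\bbZ_2^m$ with $k=ts-d$ systematic nodes, $p=s+1$ parities, and $N=(ts+s+1)-d=k+s+1$ nodes, whose parity labels are the original labels restricted to the surviving positions. It remains MDS by the standard fact that a shortened MDS code is MDS: given any $k$ of the surviving nodes, adjoin the $d$ deleted systematic nodes (whose contents are the fixed shortening values) to obtain $ts$ nodes of the longer code, which determine the whole codeword by the MDS property guaranteed in Theorem~\ref{thm:main}, in particular the $k$ surviving message blocks.

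It then remains to check that the repair scheme descends verbatim. For a surviving systematic node at local position $i$ of block $B_j$, repair it with the helper set $\{\bfp^{(0)},\bfp^{(j)}\}$ together with the other $k-1$ systematic nodes, each returning the $M/2$ rows indexed by $H_i^*$. Applying the Recovery Lemma (Lemma~\ref{lem:recovery}) to the label pair $(\bfS_0,\bfS_j)$: at position $(i,j)$ the two labels are $\vzero$ and $\bfh_i\notin H_i$, so $\{\vzero,\bfh_i\}$ is a complete set of coset representatives of $H_i$ (condition (R1)); at every other surviving position the $\bfS_j$-label is $\vzero$ or some $\bfh_{i'}$ with $i'\ne i$, both of which lie in $H_i$ by (T1) (condition (R2)). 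Deleting columns only removes positions that already met (R2), so recovery is unaffected. Each of the $(k-1)+2=k+1$ helpers reads the single row set $H_i^*$, whose skip cost under $\tau$ equals $c(H_i;\tau)\le\sigma$; summing over helpers, the total skip cost is at most $(k+1)\sigma$, which is $0$, $k+1$, and $3(k+1)$ for parts (i), (ii), (iii). The only point requiring care is the bookkeeping when $d>0$ and $B_s$ is a partial block --- one must confirm the Recovery-Lemma conditions still hold for nodes in that block (they do, since only positions $i'\ne i$ enter and (T1) is unchanged) and that $B_s$ is never emptied, which follows from $d\le t-1$ and $s\ge1$; everything else is a direct application of Construction~A, Lemma~\ref{lem:construction}, and Theorem~\ref{thm:main}.
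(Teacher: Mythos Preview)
Your proposal is correct and follows exactly the approach the paper uses: shortening the Construction~A codes by removing systematic columns and observing that the repair scheme (and hence the skip-cost bound) is inherited. The paper states this in one sentence without spelling out the details; your version fills in the bookkeeping carefully and correctly.
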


In the next construction, we restrict attention to the case of two parity nodes. 
For $M=8$ and $M=16$, this improves the achievable rates from $3/5$ to $4/6$ and from $4/6$ to $5/7$, respectively. 

\begin{theorem}[Construction B - Two Parities]
Consider $p=2$. In other words, $N=k+2$.
\begin{enumerate}[(i)]
\item Let $\cG=\bbZ_2^3$ with $M=8$ and the row order is
$001,010,011,000,100,101,110,111$.
Consider the two parity nodes with permutation labels
\[
\bfS_1=(\vzero,\vzero,\vzero,\vzero), \qquad 
\bfS_2=(000,100,110,101).
\]

Then this defines a $(N,k,M)=(6,4,8)$ zigzag code with skip cost $10$.

    \item Let $\cG=\bbZ_2^4$ with $M=16$ and row order 
$    1000,1101,1110,1111,1100,1011,0000,0001$, $0010,0011,0100,0101,0110$, $1001,1010,0111$.
    Consider the two parity nodes with permutation labels
    \[
    \bfS_1=(\vzero,\vzero,\vzero,\vzero,\vzero), \qquad 
    \bfS_2=(0000,0100,0010,0001,1111).
    \]
    Then this defines a $(N,k,M)=(7,5,16)$ zigzag code with maximum skip cost $30$.
\end{enumerate}

\end{theorem}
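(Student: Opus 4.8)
The plan is to establish, for each of the two codes, (a) the MDS property and (b) a repair‑by‑transfer scheme whose total skip cost, maximised over failed systematic nodes, meets the stated bound. Part (a) is immediate from Theorem~\ref{thm:main}: the row‑index group is $\cG=\bbZ_2^m$ ($m=3$, resp. $m=4$), so it suffices to weight the two permutation matrices by Cauchy coefficients $\gamma_{ij}=(\alpha_i-\beta_j)^{-1}$ with $\alpha_1,\alpha_2,\beta_1,\ldots,\beta_k$ distinct elements of $\bbF_8$ (characteristic two, size $\ge N$ in both cases). So the substance is part (b).

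The first step is to note that since $\bfS_1=(\vzero,\ldots,\vzero)$, in the notation of the Recovery Lemma (Lemma~\ref{lem:recovery}) the pair at coordinate $i$ is $\{u_i,v_i\}=\{\vzero,v_i\}$, where $v_i$ is the $i$-th entry of $\bfS_2$. Hence an index‑two subgroup $H_j$ of $\cG$ validates Lemma~\ref{lem:recovery} for the failure of systematic node $j$ exactly when either $v_j\notin H_j$ and $v_i\in H_j$ for all $i\ne j$ (Case~1), or $v_j\in H_j$ and $v_i\notin H_j$ for all $i\ne j$ (Case~2). Writing $H_j=\ker\ell_j$ for a nonzero $\bbF_2$-functional $\ell_j$ on $\cG$, Case~1 is the system $\ell_j(v_j)=1$, $\ell_j(v_i)=0\ (i\ne j)$, and Case~2 is $\ell_j(v_j)=0$, $\ell_j(v_i)=1\ (i\ne j)$. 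In both constructions one checks that $v_2,\ldots,v_k$ form a basis of $\cG$; then for every $j\ge2$ the failure is handled by Case~1 with $\ell_j$ the unique functional vanishing on $\operatorname{span}\{v_i:i\ne j\}$ and equal to $1$ on $v_j$, while $j=1$ (where $v_1=\vzero$ rules out Case~1) is handled by Case~2 with $\ell_1$ the unique functional taking value $1$ on each of $v_2,\ldots,v_k$. In particular each $H_j$ is uniquely determined; I would write these out explicitly for the two $\bfS_2$'s.

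The second step is to convert the download pattern of Lemma~\ref{lem:recovery} — together with the Remark after it, which permits replacing a coset by its complement — into a skip‑cost count. In Case~1 all $k+1$ helpers (the $k-1$ surviving systematic nodes and the two parities) read the same half $H_j^\ast\in\{H_j,\,\cG\setminus H_j\}$, so repairing node $j$ costs $(k+1)\,c(H_j;\tau)$. In Case~2 the surviving systematic nodes and one parity read $H_j$ while the other parity reads $\cG\setminus H_j$ (or the two cosets swap roles), so repairing node $j$ costs $\min\{\,k\cdot\cost(H_j;\tau)+\cost(\cG\setminus H_j;\tau),\ k\cdot\cost(\cG\setminus H_j;\tau)+\cost(H_j;\tau)\,\}$. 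It then remains to enumerate, under the prescribed ordering $\tau$, the positions of $H_j$ and of $\cG\setminus H_j$, read off the two skip costs, and substitute. For part (i) the per‑node skip costs come out $0,5,10,10$, whence the maximum $10$; for part (ii) they come out $30,6,30,30,12$, whence the maximum $30$. (En route, every helper reads exactly $M/2$ symbols, so the scheme also attains the optimal rebuilding ratio $1/2$.)

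I expect the difficulty to be bookkeeping, not ideas: one must correctly identify the unique hyperplane $H_j$ forced by the Case~1/Case~2 dichotomy, remember that in Case~2 the two parities are constrained to read opposite cosets (so the cost is the $\min$ above rather than $(k+1)c(H_j;\tau)$), and enumerate positions under the bespoke non‑lexicographic orderings $\tau$ without arithmetic slips. The cleanest presentation is a table listing, for each $j$, the functional $\ell_j$, the case, the sorted position multiset of $H_j^\ast$, and the resulting skip cost; the maximum entry of the last column is the claimed bound.
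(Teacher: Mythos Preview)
Your proposal is correct and follows essentially the same approach as the paper: identify for each failed node $j$ the appropriate index-two subgroup $H_j$, invoke the Recovery Lemma (Case~1 for $j\ge 2$, Case~2 for $j=1$ since $v_1=\vzero$), and tabulate the coset positions under $\tau$ to read off skip costs, arriving at precisely the per-node values $0,5,10,10$ and $30,6,30,30,12$ that the paper computes. Your framing via linear functionals $\ell_j$ and the observation that $v_2,\ldots,v_k$ form a basis (hence each $H_j$ is uniquely forced) is a cleaner explanation than the paper's bare listing of the $H_j$, and your Case~2 cost formula $\min\{k\cdot\cost(H)+\cost(\cG\setminus H),\,k\cdot\cost(\cG\setminus H)+\cost(H)\}$ makes explicit what the paper leaves implicit; the MDS discussion in your part~(a) is correct but superfluous here, since the theorem statement only asserts the skip-cost bound.
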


\begin{proof}
(i) We specify the repair scheme for each systematic node $\bfa^{(i)}$ with $i\in [4]$.
Define the subgroups
\[
H_1=\{x_1=0\},\quad 
H_2=\{x_1+x_2+x_3=0\},\quad 
H_3=\{x_2=0\},\quad 
H_4=\{x_3=0\}.
\]

\begin{itemize}
    \item \emph{Repair of $\bfa^{(1)}$.}  
    Download rows indexed by $H_1=\{001,010,011,000\}$ from $\{\bfa^{(i)}:i\in\{2,3,4\}\}\cup\{\bfp^{(1)}\}$,  
    and rows indexed by $\cG\setminus H_1$ from $\bfp^{(2)}$.  
    Each helper incurs skip cost $0$, so the total skip cost is $0\cdot 5=0$. Note that for this case, the recovery works because each of $(000,100)$, $(000,110)$, and $(000,101)$ forms a complete coset representatives of $H_1$, while $(000,000)\in H_1$, ensuring that Conditions (R1$'$) and (R2$'$) of Lemma~\ref{lem:recovery} are satisfied.  

    \item \emph{Repair of $\bfa^{(2)}$.}  
    Download rows indexed by $\cG\setminus H_2=\{011,000,101,110\}$ from $\{\bfa^{(i)}:i\in\{1,3,4\}\}\cup\{\bfp^{(1)},\bfp^{(2)}\}$.  
    Each helper incurs skip cost $1$, giving total skip cost $1\cdot 5=5$.

    \item \emph{Repair of $\bfa^{(3)}$.}  
    Download rows indexed by $H_3=\{001,000,100,101\}$ from $\{\bfa^{(i)}:i\in\{1,2,4\}\}\cup\{\bfp^{(1)},\bfp^{(2)}\}$.  
    Each helper incurs skip cost $2$, giving total skip cost $2\cdot 5=10$.

    \item \emph{Repair of $\bfa^{(4)}$.}  
    Download rows indexed by $H_4=\{010,000,100,110\}$ from $\{\bfa^{(i)}:i\in\{1,2,3\}\}\cup\{\bfp^{(1)},\bfp^{(2)}\}$.  
    Each helper incurs skip cost $2$, giving total skip cost $2\cdot 5=10$.
\end{itemize}

Thus, the maximum skip cost is $10$.

(ii) We specify the repair scheme for each systematic node $\bfa^{(i)}$ with $i\in[5]$.
Define
    \[
    H_1=\{x_2+x_3+x_4=0\}, \quad 
    H_2=\{x_1+x_2=0\}, \quad 
    H_3=\{x_1+x_3=0\}, \quad 
    H_4=\{x_1+x_4=0\}, \quad 
    H_5=\{x_1=0\}.
    \]
\begin{itemize}
    \item \emph{Repair of $\bfa^{(1)}$.}  
    Download rows indexed by $H_1=\{1000,1101,1110,1011,0000,0011,0101,0110\}$ 
    from $\{\bfa^{(i)}:i\in\{2,3,4,5\}\}\cup\{\bfp^{(1)}\}$,  
    and rows indexed by $\cG\setminus H_1$ from $\bfp^{(2)}$.  
    Each helper incurs skip cost $5$, giving total $5\cdot 6=30$.

    \item \emph{Repair of $\bfa^{(2)}$.}  
    Download rows indexed by $H_2=\{1101,1110,1111,1100,0000,0001,0010,0011\}$ 
    from $\{\bfa^{(i)}:i\in\{1,3,4,5\}\}\cup\{\bfp^{(1)},\bfp^{(2)}\}$.  
    Each helper incurs skip cost $1$, giving total $1\cdot 6=6$.

    \item \emph{Repair of $\bfa^{(3)}$.}  
    Download rows indexed by $H_3=\{1110,1111,1011,0000,0001,0100,0101,1010\}$ 
    from $\{\bfa^{(i)}:i\in\{1,2,4,5\}\}\cup\{\bfp^{(1)},\bfp^{(2)}\}$.  
    Each helper incurs skip cost $5$, giving total $5\cdot 6=30$.

    \item \emph{Repair of $\bfa^{(4)}$.}  
    Download rows indexed by $H_4=\{1101,1111,1011,0000,0010,0100,0110,1001\}$ 
    from $\{\bfa^{(i)}:i\in\{1,2,3,5\}\}\cup\{\bfp^{(1)},\bfp^{(2)}\}$.  
    Each helper incurs skip cost $5$, giving total $5\cdot 6=30$.

    \item \emph{Repair of $\bfa^{(5)}$.}  
    Download rows indexed by $H_5=\{0000,0001,0010,0011,0100,0101,0110,0111\}$ 
    from $\{\bfa^{(i)}:i\in\{1,2,3,4\}\}\cup\{\bfp^{(1)},\bfp^{(2)}\}$.  
    Each helper incurs skip cost $2$, giving total $2\cdot 6=12$.
\end{itemize}

Thus, the maximum skip cost across all repairs is $30$.

\end{proof}

\begin{remark}
Table~\ref{tab:parameters} provides an illustrative summary of our results by comparing the parameters of widely deployed Reed–Solomon codes in storage systems (see for example~\cite{dinh2022}) with the corresponding zigzag codes that achieve small skip cost. 
The table highlights that in many practical settings, zigzag codes can match or even improve upon the $(N,k)$ parameters of RS codes, while offering bounded skip cost that enables more efficient repair. 
In particular, our constructions yield codes with zero skip cost (e.g., for Filebase/Sia and Storj), as well as low skip cost multiples of $(k+1)$ for common configurations such as $(9,6)$ and $(11,8)$. 
This demonstrates that zigzag codes provide a viable alternative to RS codes for distributed storage systems, combining the MDS property with improved repair bandwidth and access latency.

\afterpage{
\clearpage
 \begin{sidewaystable}[htbp]
 \renewcommand{\arraystretch}{1.3}
 % \begin{table}
\centering
\small
\begin{tabular}{|C{3.5cm}|C{1cm}|C{1.5cm}|C{2cm}||C{2cm}|C{2cm}|C{2cm}|C{2cm}|C{4cm}| }
\hline
\multicolumn{4}{|c||}{\small\textbf{RS codes used in practice}\rule{0pt}{3ex}} &
\multicolumn{5}{c|}{\small\textbf{Zigzag codes with small skip cost}\rule{0pt}{3ex}} \\

\hline
\textbf{System} & \textbf{RS $(N,k)$} &   \textbf{rebuilding ratio} & \textbf{repair-fragmentation ratio} & \textbf{Zigzag $(N,k,M)$} & \textbf{Skip cost} & \textbf{rebuilding ratio} & \textbf{repair fragmentation ratio} & \textbf{Construction (with $s$)} \\
\hline
IBM Spectrum Scale RAID     &  \multirow{2}{*}{$(10,8)$}  & \multirow{2}{*}{$1$} & \multirow{2}{*}{$1$}
  &  \multirow{2}{*}{\textcolor{red}{$(11,8,16)$}} &  \multirow{2}{*} {$3(k{+}1)=27$} & \multirow{2}{*}{$1/2$} & \multirow{2}{*}{$11/16$} &\multirow{2}{*}{Construction~A(iii), $s=2$ }\\
Linux RAID-6               &  &  &  & & 
  &  &  & \\ \hline
Google FS II              & \multirow{3}{*}{(9,6)} & \multirow{3}{*}{1}& \multirow{3}{*}{1} & \multirow{3}{*}{(9,6,8)} &  \multirow{3}{*}{$(k+1)=7$} &  \multirow{3}{*}{$1/2$} &  \multirow{3}{*}{$5/8$} &  \multirow{3}{*}{Construction~A(ii), $s=2$}  \\
Quantcast File System    &  & &  &  &  &  & &  \\
Hadoop 3.0 HDFS-EC       &  & &  &  &  &  & &  \\ \hline
Yahoo Cloud Object Store    & $(11,8)$ & $1$ & $1$
  & $(11,8,16)$             & $3(k{+}1)=27$  &  $1/2$ &  $11/16$ & Construction~A(iii), $s=2$ \\ \hline
Backblaze Vaults (online backup) & $(20,17)$ & $1$ & $1$ 
  & \textcolor{red}{$(23,17,16)$} & $3(k+1)=54$            &   $1/2$ & $11/16$ & Shortening Construction~A(iii), $s=5$ \\ \hline
Facebook’s f4 storage       & $(14,10)$  & $1$ & $1$ 
  & $(14,10,16)$              & $3(k{+}1)=33$   &   $1/2$ & $11/16$ & Shortening Construction~A(iii), $s=3$ \\ \hline
Baidu Atlas Cloud Storage   & $(12,8)$  & $1$ & $1$
  & $(12,8,8)$                & $(k{+}1)=9$    &   $1/2$ & $5/8$ & Shortening Construction~A(ii), $s=3$ \\ \hline
Microsoft Pelican (cold)    & $(18,15)$ & $1$ & $1$
  & $(20,15,16)$ & $3(k+1)=48$            &  $1/2$ & $11/16$ & Shortening Construction~A(iii), $s=4$ \\ \hline
Filebase/Sia (blockchain)   & $(30,10)$ & $1$ & $1$
  & $(16,10,4)$               & $0$            &  $1/2$ & $1/2$ &Construction~A(i), $s=5$ \\ \hline
\multirow{2}{*}{Storj (blockchain)}  & $(40,20)$  & \multirow{2}{*}{$1$} & \multirow{2}{*}{$1$} &  $(31,20,4)$ &  \multirow{2}{*}{$0$}          &  $1/2$ & $1/2$ &  Construction~A(i), $s=10/20$ \\
& $(80,40)$ &  & 
  & $(61,40,4)$  & $0$       &  $1/2$ &  $1/2$ & Construction~A(i), $s=10/20$ \\\hline
\end{tabular}

%Condense the rows. Including metrics for rebuilding ratio and repair-fragmentation ratio.

\caption{Reed–Solomon codes in practice and zigzag codes from this work. 
Entries in \textcolor{red}{red} have slightly lower rates (more parity) than the listed RS profiles but remain close, while achieving bounded skip cost. 
Skip cost is the total (summed across all helpers) per single-node repair.}
\label{tab:parameters}
 \end{sidewaystable}
   \clearpage}

\end{remark}

\section{Conclusion}\label{sec:conclusion}

We revisited zigzag array codes with an emphasis on the row-indexing group $\cG$. 
We proved that when $\cG=\bbZ_2^m$ and the field has characteristic two, explicit coefficients over any field with $|\bbF|\ge N$ guarantee the MDS property, thereby decoupling the dependence among $p$, $k$, and $M$. 

Building on this result, we introduced an ordering-and-subgroup framework that enables repair-by-transfer schemes with bounded skip cost and low repair-fragmentation ratio (RFR), while still retaining optimal access and optimal rebuilding ratio for single-node repair. 

Several open problems remain:  
\begin{enumerate}[(i)]
    \item {\em Lower bounds on RFR.} What are the tight information-theoretic lower bounds on RFR as a function of $(N,k,M,p)$ and the number of helpers $d$? While RFR is trivially lower bounded by the rebuilding ratio and is achieved by zero–skip-cost codes, whether this is possible for all parameters remains unclear.  
    
    \item {\em Tradeoffs.} Characterize the fundamental tradeoff among rate, RFR, subpacketization, and field size beyond the MSR setting. 
    
    \item {\em Multiple failures.} Extend the zigzag framework—particularly Lemma~\ref{lem:recovery}—to the simultaneous repair of $t>1$ failed nodes with bounded RFR.  
    
    \item {\em Other code families.} Can other classes of array codes be adapted to support repair schemes with provably low RFR?  
\end{enumerate}

\begin{comment}
TO BE WRITTEN

In this work, we investigate the MDS property of zigzag codes constructed over the additive group \( \bbZ_2^m \). While classical zigzag codes require field sizes and subpacketization parameters that grow with the number of parity nodes, we present a new perspective based on generator matrix analysis. Specifically, by leveraging algebraic properties of permutation matrices and Cauchy-type coefficient assignments, we show that almost all zigzag codes are MDS over small fields with size \( |\bbF| \geq N \). This result allows greater flexibility in the choice of \( k \) and \( M \), enabling the construction of MDS array codes with both small field size and subpacketization. Based on this, we present an explicit construction of MDS array codes where each systematic node can be repaired by accessing half of the rows from two parity-check nodes and the remaining systematic nodes, achieving the optimal rebuilding ratio of \( \frac{1}{2} \). Moreover, the accessed rows always form a consecutive block, ensuring zero skip cost in repair.
\end{comment}

\bibliographystyle{IEEEtran}
\bibliography{references}

\phantom{SPACE TO FORCE NEW PAGE}

\appendices

\section{Proof of Lemma~\ref{lem:cauchy-det}}\label{ Appendix}

In this appendix, we present the proof of Lemma~\ref{lem:cauchy-det}. Specifically, we derive the determinant formula for a block matrix formed by permutation matrices whose row-index group is $\bbZ_2^m$.

To this end, we adopt the following convention: for $\bfu\in\bbZ_2^m$, we use $\bfP_{\bfu}^{(m)}$ to denote a permutation matrix whose rows and columns are indexed by elements of $\bbZ_2^m$. Moreover, the entries of $\bfP_{\bfu}^{(m)}$ are given by

\begin{equation}\label{eq:perm-matrix-Z2m}
\bfP_{\bfu}^{(m)}(\bfg,\bfh)
=
\begin{cases}
1, & \text{if } \bfh = \bfg + \bfu,\\[2pt]
0, & \text{otherwise,}
\end{cases}
\qquad \bfg,\bfh,\bfu \in \bbZ_2^m .
\end{equation}

To prepare for the proof, we begin by stating a fundamental property of these permutation matrices, followed by several technical lemmas essential to the argument.

\begin{proposition}\label{prop:perm}
    For any $\bfu_1, \bfu_2 \in \cG = \bbZ_2^m$, the permutation matrices $\bfP_{\bfu}^{(m)}$ as in \eqref{eq:perm-matrix-Z2m} satisfy
    \begin{equation}\label{permu_propo}
        \bfP^{(m)}_{\bfu_1} \, \bfP^{(m)}_{\bfu_2} = \bfP^{(m)}_{\bfu_1 + \bfu_2}.
    \end{equation}
    \noindent Therefore, $\bfP^{(m)}_{\bfu_1}$ and $\bfP^{(m)}_{\bfu_2}$ commute.
    \end{proposition}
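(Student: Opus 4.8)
The plan is to verify \eqref{permu_propo} entrywise, which is the natural approach since the matrices in question are defined entrywise in \eqref{eq:perm-matrix-Z2m}. First I would expand the $(\bfg,\bfh)$ entry of the product using ordinary matrix multiplication, indexing the inner sum over $\cG=\bbZ_2^m$:
\[
\bigl(\bfP^{(m)}_{\bfu_1}\,\bfP^{(m)}_{\bfu_2}\bigr)(\bfg,\bfh)
\;=\;\sum_{\bfk\in\bbZ_2^m}\bfP^{(m)}_{\bfu_1}(\bfg,\bfk)\,\bfP^{(m)}_{\bfu_2}(\bfk,\bfh).
\]
By \eqref{eq:perm-matrix-Z2m}, the factor $\bfP^{(m)}_{\bfu_1}(\bfg,\bfk)$ vanishes unless $\bfk=\bfg+\bfu_1$, so the sum collapses to the single surviving summand $\bfP^{(m)}_{\bfu_2}(\bfg+\bfu_1,\bfh)$. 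Applying \eqref{eq:perm-matrix-Z2m} once more, this value equals $1$ exactly when $\bfh=(\bfg+\bfu_1)+\bfu_2$ and $0$ otherwise; by associativity of addition in $\bbZ_2^m$ we may rewrite the condition as $\bfh=\bfg+(\bfu_1+\bfu_2)$. Comparing with \eqref{eq:perm-matrix-Z2m} a final time shows that this is precisely the $(\bfg,\bfh)$ entry of $\bfP^{(m)}_{\bfu_1+\bfu_2}$, which establishes \eqref{permu_propo}.

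For the commutativity claim, I would simply observe that $\bbZ_2^m$ is an abelian group, so $\bfu_1+\bfu_2=\bfu_2+\bfu_1$; combining this with \eqref{permu_propo} gives $\bfP^{(m)}_{\bfu_1}\bfP^{(m)}_{\bfu_2}=\bfP^{(m)}_{\bfu_1+\bfu_2}=\bfP^{(m)}_{\bfu_2+\bfu_1}=\bfP^{(m)}_{\bfu_2}\bfP^{(m)}_{\bfu_1}$. It is also worth noting in passing that \eqref{permu_propo} says exactly that $\bfu\mapsto\bfP^{(m)}_{\bfu}$ is a group homomorphism from $(\cG,+)$ into the group of $M\times M$ permutation matrices; nothing in the argument uses characteristic two beyond $\cG=\bbZ_2^m$ being a finite abelian group, but that is all that is needed for the applications in the proof of Lemma~\ref{lem:cauchy-det}.

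There is no genuine obstacle here; the proof is a one-line unwinding of definitions. The only point requiring a moment of care is the bookkeeping step that isolates the unique nonzero term in the inner sum — making explicit that $\bfP^{(m)}_{\bfu_1}$ has a single $1$ in each row — after which the identity falls out immediately.
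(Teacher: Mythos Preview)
Your proposal is correct and follows essentially the same approach as the paper: both proofs expand the $(\bfg,\bfh)$ entry of the product as a sum over $\bfk\in\bbZ_2^m$, use \eqref{eq:perm-matrix-Z2m} to collapse the sum to the single term with $\bfk=\bfg+\bfu_1$, and then read off the condition $\bfh=\bfg+(\bfu_1+\bfu_2)$. Your treatment is slightly more explicit about the commutativity step (invoking that $\bbZ_2^m$ is abelian) and adds the homomorphism remark, but the argument is the same.
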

    
    \begin{proof}
    By \eqref{eq:perm-matrix-Z2m}, the $(\bfg,\bfh)$-th entry of the product $\bfP_{\bfu_1}^{(m)} \bfP_{\bfu_2}^{(m)}$ is
    \[
    \left(\bfP_{\bfu_1}^{(m)} \bfP_{\bfu_2}^{(m)}\right)(\bfg, \bfh) = \sum_{\bfk\in\bbZ_2^m} \bfP_{\bfu_1}^{(m)}(\bfg, \bfk)\, \bfP_{\bfu_2}^{(m)}(\bfk, \bfh).
    \]

Since $\bfP_{\bfu_1}^{(m)}(\bfg,\bfk) = 1$ if and only if $\bfk=\bfg+\bfu_1$,
 we get 
$\bfP_{\bfu_1}^{(m)} \bfP_{\bfu_2}^{(m)}(\bfg,\bfh)
= \bfP_{\bfu_2}^{(m)} (\bfg+\bfu_1,\bfh)
= 1$  if and only if $\bfh=(\bfg+\bfu_1)+\bfu_2$.
Thus,
\begin{equation}
    \left(\bfP_{\bfu_1}^{(m)} \bfP_{\bfu_2}^{(m)}\right)(\bfg, \bfh) =
    \begin{cases}
    1, & \text{if } \bfh = \bfg + (\bfu_1 + \bfu_2), \\
    0, & \text{otherwise,}
    \end{cases}
\end{equation}
which implies      
$\bfP_{\bfu_1}^{(m)} \bfP_{\bfu_2}^{(m)} = \bfP_{\bfu_1+\bfu_2}^{(m)}$.
 \end{proof}

 Next, define the sets $G_1=\{ \boldsymbol{u}\in \bbZ_2^m |~u_{1}=0 \}$,  and $G_2=\{ \boldsymbol{u}\in \bbZ_2^m  |~u_{1}=1 \}$. Clearly, $G_1 \cup G_2 = \bbZ_2^m$. We then have the following lemma.

\begin{lemma}\label{lem:perm_recursive}
     Let $m$ be a positive integer.  Let $\bfu = (u_1, \bfu') \in \bbZ_2^m$ with $u_1 \in \bbZ_2$ and $\bfu' = (u_{2}, \ldots, u_m) \in \bbZ_2^{m-1}$. Let $\bfP^{(m)}_{\bfu}$ be the $2^m\times 2^m$ permutation matrix defined in \eqref{eq:perm-matrix-Z2m}. Then
 \begin{equation}\label{le7:eq}
     \bfP^{(m)}_{\bfu} =
     \begin{cases}
     \begin{pmatrix}
     \boldsymbol{P}^{(m-1)}_{\bfu'} & \mathbf{0} \\
     \mathbf{0} & \boldsymbol{P}^{(m-1)}_{\bfu'}
     \end{pmatrix}, & \text{if } u_1 = 0, \\[2em]
     \begin{pmatrix}
         \mathbf{0} & \boldsymbol{P}^{(m-1)}_{\bfu'} \\
     \boldsymbol{P}^{(m-1)}_{\bfu'} &  \mathbf{0}
     \end{pmatrix}, & \text{if } u_1 = 1\,.
     \end{cases}
 \end{equation}
 %Furthermore, these permutation matrices commute for any $\bfu_1, \bfu_2 \in \bbZ_2^m$, i.e., $\boldsymbol{P}^{(m)}_{\bfu_1} \boldsymbol{P}^{(m)}_{\bfu_2} = \boldsymbol{P}^{(m)}_{\bfu_2} \boldsymbol{P}^{(m)}_{\bfu_1}$.
     \end{lemma}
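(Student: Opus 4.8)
\textbf{Proof proposal for Lemma~\ref{lem:perm_recursive}.}

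The plan is to prove the claimed block decomposition directly from the definition~\eqref{eq:perm-matrix-Z2m}, by tracking how the row index group $\bbZ_2^m$ splits according to its first coordinate. First I would fix the natural identification $\bbZ_2^m \cong \bbZ_2 \times \bbZ_2^{m-1}$, writing each element as $\bfg = (g_1,\bfg')$ with $g_1 \in \bbZ_2$ and $\bfg' \in \bbZ_2^{m-1}$, and I would adopt the ordering convention (implicit in the statement) that all indices with $g_1 = 0$ come before all indices with $g_1 = 1$, i.e.\ the rows and columns are blocked as $G_1$ followed by $G_2$. Under this convention, $\bfP^{(m)}_{\bfu}$ is naturally partitioned into four $2^{m-1}\times 2^{m-1}$ sub-blocks indexed by the pairs $(g_1,h_1)\in\bbZ_2^2$, and the $(g_1,h_1)$-block has $(\bfg',\bfh')$-entry equal to $\bfP^{(m)}_{\bfu}\bigl((g_1,\bfg'),(h_1,\bfh')\bigr)$.

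Next I would compute each of the four sub-blocks. By~\eqref{eq:perm-matrix-Z2m}, the entry is $1$ exactly when $(h_1,\bfh') = (g_1,\bfg') + (u_1,\bfu') = (g_1+u_1,\ \bfg'+\bfu')$; the sum in the first coordinate is taken in $\bbZ_2$. Splitting on the value of $u_1$:
\begin{itemize}
\item If $u_1 = 0$, then a nonzero entry requires $h_1 = g_1$, so the two off-diagonal blocks vanish, while each diagonal block (the $(0,0)$-block and the $(1,1)$-block) has $(\bfg',\bfh')$-entry equal to $1$ iff $\bfh' = \bfg'+\bfu'$, which is precisely $\bfP^{(m-1)}_{\bfu'}(\bfg',\bfh')$. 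This gives the block-diagonal form.
\item If $u_1 = 1$, then a nonzero entry requires $h_1 = g_1+1 \ne g_1$, so the two diagonal blocks vanish, while each off-diagonal block has $(\bfg',\bfh')$-entry equal to $1$ iff $\bfh' = \bfg'+\bfu'$, i.e.\ equals $\bfP^{(m-1)}_{\bfu'}$. This gives the block-anti-diagonal form.
\end{itemize}
Assembling the four blocks yields exactly the two cases in~\eqref{le7:eq}. (As a sanity check, when $\bfu = \vzero$ both cases degenerate correctly: $u_1=0$ gives $\operatorname{diag}(\bfI,\bfI) = \bfI_{2^m}$.)

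This lemma is essentially a bookkeeping statement, so I do not anticipate a genuine mathematical obstacle; the only thing to be careful about is the \emph{ordering convention}. One must state explicitly that the index set $\bbZ_2^m$ is listed with the first coordinate as the most significant bit (equivalently, $G_1$ before $G_2$), since the block structure in~\eqref{le7:eq} is meaningful only relative to that ordering. Alternatively, one could phrase the proof entirely in coordinate-free terms using the Kronecker-product identity $\bfP^{(m)}_{(u_1,\bfu')} = \bfP^{(1)}_{u_1} \otimes \bfP^{(m-1)}_{\bfu'}$, where $\bfP^{(1)}_0 = \bfI_2$ and $\bfP^{(1)}_1 = \left(\begin{smallmatrix}0&1\\1&0\end{smallmatrix}\right)$; expanding the Kronecker product then reproduces~\eqref{le7:eq} immediately. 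I would likely include this Kronecker-product remark as the cleanest route, and keep the entrywise verification above as the self-contained argument.
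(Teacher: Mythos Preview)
Your proposal is correct and follows essentially the same approach as the paper: the paper also writes indices as $(i_1,\bfi')$ and $(j_1,\bfj')$, observes from~\eqref{eq:perm-matrix-Z2m} that the entry is nonzero iff $(j_1,\bfj')=(i_1+u_1,\bfi'+\bfu')$, and then splits on $u_1\in\{0,1\}$ to read off the diagonal versus anti-diagonal block structure. Your additional remarks on the ordering convention and the Kronecker-product formulation $\bfP^{(m)}_{(u_1,\bfu')}=\bfP^{(1)}_{u_1}\otimes\bfP^{(m-1)}_{\bfu'}$ are not in the paper but are a nice complement.
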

 
     \begin{proof}

By \eqref{eq:perm-matrix-Z2m}, for $\bfu=(u_1,\bfu')\in\bbZ_2^m$ and
$\bfi=(i_1,\bfi')$, $\bfj=(j_1,\bfj')$ with $i_1,j_1\in\bbZ_2$ and
$\bfi',\bfj'\in\bbZ_2^{m-1}$, we have
\[
\bfP^{(m)}_{\bfu}\bigl((i_1,\bfi'),\,(j_1,\bfj')\bigr)=1
\text{ if and only if }
(j_1,\bfj')=(i_1+u_1,\;\bfi'+\bfu').
\]
Thus:
\begin{itemize}
\item If $u_1=0$, then $j_1=i_1$ and $\bfj'=\bfi'+\bfu'$, so the nonzeros lie on the two
diagonal blocks and each such block equals $\bfP^{(m-1)}_{\bfu'}$. Therefore
\[
\bfP^{(m)}_{\bfu}=
\begin{pmatrix}
\bfP^{(m-1)}_{\bfu'} & \mathbf{0}\\
\mathbf{0} & \bfP^{(m-1)}_{\bfu'}
\end{pmatrix}.
\]
\item If $u_1=1$, then $j_1=i_1+1$ and $\bfj'=\bfi'+\bfu'$, so the nonzeros lie on the two
off-diagonal blocks and each such block equals $\bfP^{(m-1)}_{\bfu'}$. Therefore
\[
\bfP^{(m)}_{\bfu}=
\begin{pmatrix}
\mathbf{0} & \bfP^{(m-1)}_{\bfu'}\\
\bfP^{(m-1)}_{\bfu'} & \mathbf{0}
\end{pmatrix}.
\]
\end{itemize}
Combining the two cases yields \eqref{le7:eq}.
\end{proof}
 
To facilitate the computation of determinants of block matrices whose entries commute, we recall a fundamental result from~\cite{kovacs1999determinants}. 

\begin{theorem}{\cite{kovacs1999determinants}}\label{thm:Laplace}
Let $\cR$ be a commutative ring. Suppose $\bfG$ is a $k\times k$ block matrix with blocks
$A_{i,j}\in \cR^{\,n\times n}$ that commute pairwise. Then
\[
\det(\bfG)
=
\det\!\Bigg(
   \sum_{\pi\in S_k} (\operatorname{sgn}\pi)\,
   A_{1,\pi(1)} A_{2,\pi(2)} \cdots A_{k,\pi(k)}
\Bigg).
\]
Here, $S_k$ is the symmetric group on the set $ \{1,2,\ldots,k\} $, and $\mathrm{sgn}(\pi)$ denotes the sign of the permutation $\pi\in S_k$.
\end{theorem}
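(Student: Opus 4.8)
The plan is to prove Theorem~\ref{thm:Laplace} by (a) reducing to the case where the block $A_{1,1}$ is invertible over the base ring, and then (b) running an induction on the block dimension $k$ via block Gaussian elimination carried out \emph{simultaneously} over $\cR$ and over the commutative subring generated by the blocks. Throughout, let $\cA\subseteq\cR^{\,n\times n}$ be the commutative subalgebra generated by $\{A_{i,j}:i,j\in[k]\}$; since the blocks commute pairwise, $\cA$ is a genuine commutative ring, $\bfG\in\cA^{\,k\times k}$, and the matrix $D\triangleq\sum_{\pi\in S_k}(\operatorname{sgn}\pi)\,A_{1,\pi(1)}\cdots A_{k,\pi(k)}$ is precisely the Leibniz determinant $\det_{\cA}(\bfG)\in\cA$ — a well-defined element because commutativity renders the order of the factors in each term immaterial. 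Writing $\det_{\cR}$ for the determinant of a matrix viewed over $\cR$, the goal is the identity $\det_{\cR}(\bfG)=\det_{\cR}(D)$, where on the left $\bfG$ is the $kn\times kn$ matrix over $\cR$ and on the right $D$ is an $n\times n$ matrix over $\cR$.

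First, the reduction. I would pass to $\cR[x]$, replace $A_{1,1}$ by $A_{1,1}+x\bfI_n$ (leaving the other blocks untouched; these still commute pairwise since $x\bfI_n$ is central), and localize at $f(x)\triangleq\det_{\cR[x]}(A_{1,1}+x\bfI_n)$. As the characteristic polynomial of $-A_{1,1}$, the element $f$ is monic of degree $n$ in $x$, hence a non-zero-divisor in $\cR[x]$, so the localization map $\cR[x]\hookrightarrow\cR''\triangleq\cR[x][f^{-1}]$ is injective; over $\cR''$ the block $A_{1,1}+x\bfI_n$ is invertible, and by the Cayley--Hamilton theorem its inverse is a $\cR''$-polynomial in $A_{1,1}+x\bfI_n$, hence still lies in the commutative subalgebra generated by all the blocks. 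Thus the identity over $\cR''$ (proved next) forces it over $\cR[x]$ by injectivity — both sides being matrices with entries in $\cR[x]$ — and specializing $x\mapsto 0$, a ring homomorphism that commutes with $\det$ and with the formation of $\bfG$ and $D$, recovers the identity over $\cR$. So it suffices to treat the case where $A_{1,1}$ is invertible with $A_{1,1}^{-1}\in\cA$.

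Second, the inductive core, by induction on $k$. The case $k=1$ is immediate. For $k\ge 2$, left-multiplying $\bfG$ (as a $kn\times kn$ matrix) by the block-lower-triangular matrix with $\bfI_n$ on the diagonal and $-A_{i,1}A_{1,1}^{-1}$ in block position $(i,1)$ has $\cR$-determinant $1$ and clears the first block-column below $A_{1,1}$, producing a block matrix with $(1,1)$-block $A_{1,1}$ and lower-right $(k-1)\times(k-1)$ block $\bfG'$ whose entries are the Schur complements $A'_{i,j}\triangleq A_{i,j}-A_{i,1}A_{1,1}^{-1}A_{1,j}\in\cA$. The block-triangular formula over $\cR$ gives $\det_{\cR}(\bfG)=\det_{\cR}(A_{1,1})\,\det_{\cR}(\bfG')$. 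Performing the \emph{same} elimination inside the commutative ring $\cA$ — ordinary scalar Gaussian elimination there — yields $D=\det_{\cA}(\bfG)=A_{1,1}\cdot\det_{\cA}(\bfG')$. Since the $A'_{i,j}$ still commute pairwise, the induction hypothesis applies to $\bfG'$, giving $\det_{\cR}(\bfG')=\det_{\cR}\!\big(\det_{\cA}(\bfG')\big)$. Combining these, and using multiplicativity of $\det_{\cR}$ on the product $A_{1,1}\cdot\det_{\cA}(\bfG')$ of two $n\times n$ matrices over $\cR$,
\[
\det_{\cR}(D)=\det_{\cR}(A_{1,1})\,\det_{\cR}\!\big(\det_{\cA}(\bfG')\big)=\det_{\cR}(A_{1,1})\,\det_{\cR}(\bfG')=\det_{\cR}(\bfG).
\]

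I expect the difficulty to be bookkeeping rather than conceptual: one must keep straight, at each step, which of the three determinants is in play — the $kn\times kn$ determinant over $\cR$, the $n\times n$ determinant over $\cR$, and the Leibniz determinant landing in $\cA$ — and verify that the localization step loses nothing (the one real point being that a monic polynomial is a non-zero-divisor, so $\cR[x]\hookrightarrow\cR''$ is injective, and that Cayley--Hamilton keeps $A_{1,1}^{-1}$ inside $\cA$). An alternative I would mention but not pursue, since it needs a Zariski-density argument over the (possibly reducible) commuting variety to reduce to $\cR=\bbC$: over $\bbC$ the commuting family $\{A_{i,j}\}$ is simultaneously triangularizable, and in the resulting common flag both $\det_{\cR}(\bfG)$ and $\det_{\cR}(D)$ collapse to the same product $\prod_{r=1}^{n}\det M_r$, where $M_r$ is the $k\times k$ matrix with $(i,j)$-entry $(A_{i,j})_{r,r}$.
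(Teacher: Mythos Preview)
The paper does not prove this theorem; it is quoted as a known result from \cite{kovacs1999determinants} and used as a black box in the appendix. So there is no ``paper's own proof'' to compare against.

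That said, your argument is correct and is essentially the standard proof of this fact (and, for what it is worth, matches the approach in the cited reference). The two ingredients are exactly the right ones: (i) the localization trick --- replacing $A_{1,1}$ by $A_{1,1}+x\bfI_n$, noting that the characteristic polynomial is monic hence a non-zero-divisor in $\cR[x]$, so the localization map is injective and specialization at $x=0$ recovers the original identity --- and (ii) Schur-complement induction carried out \emph{in parallel} over $\cR$ (on the $kn\times kn$ matrix) and over the commutative subalgebra $\cA$ (on the $k\times k$ matrix). Your use of Cayley--Hamilton to keep $A_{1,1}^{-1}$ inside $\cA$ is the point that makes the parallel elimination legitimate. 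One small organizational remark: the reduction (a) is really part of the inductive step, not a one-time preprocessing --- at each stage you first make the current top-left block invertible and then eliminate --- since the Schur complement $\bfG'$ need not have an invertible $(1,1)$-block. Your text is readable this way, but it would be cleaner to state the induction hypothesis as the full theorem for size $k-1$ and fold the reduction into the proof of the inductive step.
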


\noindent In fields with \textit{characteristic two,} we have that \begin{equation}\label{eq:laplace-char2}
\det(\bfG)
=
\det\!\Bigg(
   \sum_{\pi\in S_k}
   A_{1,\pi(1)} A_{2,\pi(2)} \cdots A_{k,\pi(k)}
\Bigg).
\end{equation}

\begin{comment}{\cite{kovacs1999determinants}}\label{thm:Laplace} Let $\mathcal{R}$ be a commutative ring and let  $\text{Mat}_n(\mathcal{R})$ denote the ring of $n\times n$ matrices over $\mathcal{R}$. Assume that $\bfG$ is a $k \times k$ block matrix of blocks $A^{(i,j)} \in \text{Mat}_n(\mathcal{R})$ that commute pairwise. Then
    \[
    |\bfG| = D(\bfG) = \left| \sum_{\pi \in S_k} (\text{sgn} \, \pi) A_{(1, \pi(1))} A_{(2, \pi(2))} \cdots A_{(k, \pi(k))} \right|. \tag{1}
    \]
Here, $S_k$ is the symmetric group on the set $ \{1,2,\ldots,k\} $, and $\mathrm{sgn}(\pi)$ denotes the sign of the permutation $\pi\in S_k$. The summation is the usual one that appears in the definition of determinant. 
\end{comment}

Based on \eqref{eq:perm-matrix-Z2m}, Lemma~\ref{lem:perm_recursive}, and Theorem~\ref{thm:Laplace}, we have the following lemma.

\begin{lemma}\label{thm:perm_determinant}
Let $m\ge 1$, and let $\{\alpha_{\bfu}\}_{\bfu\in\bbZ_2^m}$ be a set of coefficients in a field $\bbF$ with characteristic two. If
\[
A^{(m)} \;=\; \sum_{\bfu\in\bbZ_2^m} \alpha_{\bfu}\,\bfP^{(m)}_{\bfu}\,,
\]
then
\[
\det\!\big(A^{(m)}\big) \;=\; \left(\,\sum_{\bfu\in\bbZ_2^m}\alpha_{\bfu}\,\right)^{2^{m}}.
\]
\end{lemma}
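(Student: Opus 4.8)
The plan is to prove the claim by induction on $m$, exploiting the recursive block structure of the permutation matrices $\bfP^{(m)}_{\bfu}$ given in Lemma~\ref{lem:perm_recursive}, together with the Laplace-type determinant formula of Theorem~\ref{thm:Laplace} in characteristic two (equation~\eqref{eq:laplace-char2}).

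First I would dispose of the base case $m=1$. Here $\bbZ_2^1=\{0,1\}$, and $A^{(1)}=\alpha_0\bfI_2+\alpha_1\bfP^{(1)}_1$, where $\bfP^{(1)}_1=\left(\begin{smallmatrix}0&1\\1&0\end{smallmatrix}\right)$. Thus $A^{(1)}=\left(\begin{smallmatrix}\alpha_0&\alpha_1\\\alpha_1&\alpha_0\end{smallmatrix}\right)$, whose determinant is $\alpha_0^2-\alpha_1^2=(\alpha_0+\alpha_1)^2$ in characteristic two, matching the claimed formula $\bigl(\sum_{\bfu}\alpha_{\bfu}\bigr)^{2}$.

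For the inductive step, assume the formula holds for $m-1$. I would split the index set $\bbZ_2^m=G_1\sqcup G_2$ according to the first coordinate, writing each $\bfu=(u_1,\bfu')$. Using Lemma~\ref{lem:perm_recursive}, $A^{(m)}=\sum_{\bfu'}\alpha_{(0,\bfu')}\left(\begin{smallmatrix}\bfP^{(m-1)}_{\bfu'}&\bf0\\\bf0&\bfP^{(m-1)}_{\bfu'}\end{smallmatrix}\right)+\sum_{\bfu'}\alpha_{(1,\bfu')}\left(\begin{smallmatrix}\bf0&\bfP^{(m-1)}_{\bfu'}\\\bfP^{(m-1)}_{\bfu'}&\bf0\end{smallmatrix}\right)$, which is the $2\times 2$ block matrix $\left(\begin{smallmatrix}B&C\\C&B\end{smallmatrix}\right)$ with $B=\sum_{\bfu'}\alpha_{(0,\bfu')}\bfP^{(m-1)}_{\bfu'}$ and $C=\sum_{\bfu'}\alpha_{(1,\bfu')}\bfP^{(m-1)}_{\bfu'}$. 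By Proposition~\ref{prop:perm} all the $\bfP^{(m-1)}_{\bfu'}$ commute, so $B$ and $C$ commute; applying~\eqref{eq:laplace-char2} to this $2\times 2$ block matrix (over the commutative ring generated by the $\bfP^{(m-1)}_{\bfu'}$) gives $\det A^{(m)}=\det(B^2+C^2)=\det\bigl((B+C)^2\bigr)=\bigl(\det(B+C)\bigr)^2$, using characteristic two again. Now $B+C=\sum_{\bfu'\in\bbZ_2^{m-1}}\bigl(\alpha_{(0,\bfu')}+\alpha_{(1,\bfu')}\bigr)\bfP^{(m-1)}_{\bfu'}$ is exactly an expression of the form treated by the induction hypothesis, with coefficient $\alpha'_{\bfu'}:=\alpha_{(0,\bfu')}+\alpha_{(1,\bfu')}$ attached to $\bfP^{(m-1)}_{\bfu'}$. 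Hence $\det(B+C)=\bigl(\sum_{\bfu'}\alpha'_{\bfu'}\bigr)^{2^{m-1}}=\bigl(\sum_{\bfu\in\bbZ_2^m}\alpha_{\bfu}\bigr)^{2^{m-1}}$, and squaring yields $\det A^{(m)}=\bigl(\sum_{\bfu}\alpha_{\bfu}\bigr)^{2^{m}}$, completing the induction.

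The main obstacle I anticipate is not conceptual but bookkeeping: one must be careful that the Laplace-type formula~\eqref{eq:laplace-char2} is applied in the correct commutative ring. The blocks of $A^{(m)}$ as a $2\times 2$ block matrix do commute, but to invoke Theorem~\ref{thm:Laplace} cleanly I would note that all relevant matrices lie in the commutative subring $\cR\subseteq\bbF^{2^{m-1}\times 2^{m-1}}$ generated by $\{\bfP^{(m-1)}_{\bfu'}:\bfu'\in\bbZ_2^{m-1}\}$ (commutativity by Proposition~\ref{prop:perm}), and that the determinant appearing on the right-hand side of~\eqref{eq:laplace-char2} is the ordinary $\bbF$-determinant of the resulting $2^{m-1}\times 2^{m-1}$ matrix over $\cR$. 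A secondary point requiring a line of justification is the identity $\det(B^2+C^2)=(\det(B+C))^2$: this follows because $B^2+C^2=(B+C)^2$ in characteristic two (the cross terms $BC+CB=2BC$ vanish, using that $B,C$ commute), and then $\det$ is multiplicative. With these two points spelled out, the argument goes through.
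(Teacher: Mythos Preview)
Your proof is correct and follows essentially the same route as the paper: induction on $m$, with the base case computed directly and the inductive step obtained by writing $A^{(m)}$ as a $2\times 2$ block matrix $\left(\begin{smallmatrix}B&C\\C&B\end{smallmatrix}\right)$ via Lemma~\ref{lem:perm_recursive}, invoking Theorem~\ref{thm:Laplace} in characteristic two, and then applying the induction hypothesis to $B+C$. Your additional remarks about the commutative subring $\cR$ and the justification of $\det(B^2+C^2)=(\det(B+C))^2$ are valid refinements that the paper handles more tersely.
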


\begin{proof}
We proceed by induction on  $m$. For the base case $m=1$, we have $M=2$ and $\bbZ_{2}=\{0,1\}$.  The corresponding permutation matrices are
     \begin{equation*}
           \bfP_0^{(1)} = 
           \begin{pmatrix}
           1 & 0 \\ 0 & 1
           \end{pmatrix}, \qquad
           \bfP_1^{(1)} = 
           \begin{pmatrix}
           0 & 1 \\ 1 & 0
           \end{pmatrix}.
         \end{equation*}
         Thus,
         \[
         A^{(1)} = \alpha_0 \bfP_0^{(1)} + \alpha_1 \bfP_1^{(1)} =
         \begin{pmatrix}
             \alpha_0 & \alpha_1 \\
             \alpha_1 & \alpha_0
         \end{pmatrix},
         \]
whose determinant is
         $\det(A^{(1)}) =\alpha_0^2 - \alpha_1^2 = (\alpha_0 + \alpha_1)^2$,  confirming the result for $m=1$.

Assume the claim holds for $m-1$ and we consider $A^{(m)} = \sum_{\bfu\in\bbZ_2^m} \alpha_{\bfu}\,\bfP^{(m)}_{\bfu}$. 

Write each $\bfu\in\bbZ_2^m$ as $(u_1,\bfu')$ with $\bfu'\in\bbZ_2^{m-1}$. By Lemma~\ref{lem:perm_recursive}, we can rewrite the matrix as the sum 
% (see \eqref{le7:eq}),
\[
A^{(m)}
=\sum_{\bfu'\in\bbZ_2^{m-1}}\!\alpha_{(0,\bfu')}\!
\begin{pmatrix}\bfP^{(m-1)}_{\bfu'}&\mathbf{0}\\ \mathbf{0}&\bfP^{(m-1)}_{\bfu'}\end{pmatrix}
+\sum_{\bfu'\in\bbZ_2^{m-1}}\!\alpha_{(1,\bfu')}\!
\begin{pmatrix}\mathbf{0}&\bfP^{(m-1)}_{\bfu'}\\ \bfP^{(m-1)}_{\bfu'}&\mathbf{0}\end{pmatrix}.
\]
Hence, we have that
\[
A^{(m)}=
\begin{pmatrix}
A^{(m-1)}_{0} & A^{(m-1)}_{1}\\
A^{(m-1)}_{1} & A^{(m-1)}_{0}
\end{pmatrix},
\text{ where }
A^{(m-1)}_{0}\triangleq \sum_{\bfu'}\alpha_{(0,\bfu')}\bfP^{(m-1)}_{\bfu'},
\text{ and }
A^{(m-1)}_{1}\triangleq\sum_{\bfu'}\alpha_{(1,\bfu')}\bfP^{(m-1)}_{\bfu'}.
\]
Since the $\bfP^{(m-1)}_{\bfu'}$ commute pairwise, $A^{(m-1)}_{0}$ and $A^{(m-1)}_{1}$ commute. 
%The standard block determinant identity for commuting blocks (e.g., 
Then Theorem~\ref{thm:Laplace} implies that
\[
\det\!\big(A^{(m)}\big)
=\det\!\big((A^{(m-1)}_{0})^2+(A^{(m-1)}_{1})^2\big)
= \det\!\big(A^{(m-1)}_{0}+A^{(m-1)}_{1}\big)^{2}.
\]
%In characteristic two, subtraction equals addition, so this becomes
%\[
%\det\!\big(A^{(m)}\big)=\det\!\big(A^{(m-1)}_{0}+A^{(m-1)}_{1}\big)^{2}.
%\]
Notice that 
\[
A^{(m-1)}_{0}+A^{(m-1)}_{1}
=\sum_{\bfu'\in\bbZ_2^{m-1}}\!\big(\alpha_{(0,\bfu')}+\alpha_{(1,\bfu')}\big)\,\bfP^{(m-1)}_{\bfu'},
\]
so by the induction hypothesis,
\[
\det\!\big(A^{(m-1)}_{0}+A^{(m-1)}_{1}\big)
=\Big(\,\sum_{\bfu'\in\bbZ_2^{m-1}}\big(\alpha_{(0,\bfu')}+\alpha_{(1,\bfu')}\big)\,\Big)^{2^{m-1}}
=\Big(\,\sum_{\bfu\in\bbZ_2^{m}}\alpha_{\bfu}\,\Big)^{2^{m-1}}.
\]
Squaring gives $\det(A^{(m)})=\big(\sum_{\bfu}\alpha_{\bfu}\big)^{2^{m}}$, as claimed.
     \end{proof}

\begin{remark}
Lemma~\ref{thm:perm_determinant} also applies when the sum runs over any subset $S\subseteq\bbZ_2^m$.
Specifically, if $ A^{(m)}_S=\sum_{\bfu\in S}\alpha_{\bfu}\,\bfP^{(m)}_{\bfu}$, then $
\det\!\big(A^{(m)}_S\big)=\Big(\sum_{\bfu\in S}\alpha_{\bfu}\Big)^{2^m}$.
This is because the sum can be generalized to all $\bfu \in \bbZ_2^m$ by setting $\alpha_{\bfu} = 0$ for all $\bfu \notin S$, without affecting the value of the determinant.
\end{remark}

By combining Theorem~\ref{thm:Laplace}, Lemma~\ref{lem:perm_recursive}, and Proposition~\ref{prop:perm}, we give the proof of Lemma~\ref{lem:cauchy-det}.

\noindent\textbf{Lemma \ref{lem:cauchy-det}.}
Let $M,N\ge 1$, and let $\bbF$ be a finite field of characteristic two.
Let $\alpha_1,\ldots,\alpha_N$ and $\beta_1,\ldots,\beta_N$ be $2N$ distinct elements of $\bbF$, and define $\gamma_{ij}=(\alpha_i-\beta_j)^{-1}$ for $1\le i,j\le N$.
Let $\bfA$ be the $MN\times MN$ block matrix
\[
\bfA =
\begin{bmatrix}
\gamma_{11} A_{11} & \gamma_{12} A_{12} & \cdots & \gamma_{1N} A_{1N} \\
\gamma_{21} A_{21} & \gamma_{22} A_{22} & \cdots & \gamma_{2N} A_{2N} \\
\vdots & \vdots & \ddots & \vdots \\
\gamma_{N1} A_{N1} & \gamma_{N2} A_{N2} & \cdots & \gamma_{NN} A_{NN}
\end{bmatrix},
\]
where each $A_{ij}$ is an $M\times M$ permutation matrix as in \eqref{eq:perm-matrix-Z2m}.
If
\[
\Delta \;=\;
\frac{\displaystyle\prod_{1\le j<i\le N}(\alpha_i-\alpha_j)(\beta_j-\beta_i)}
     {\displaystyle\prod_{i=1}^{N}\prod_{j=1}^{N}(\alpha_i-\beta_j)}.
\]
Then $\det(\bfA)=\Delta^{M}$.

\begin{proof}
By Proposition~\ref{prop:perm}, the matrices in $\{A_{ij}\}$ commute pairwise because each $A_{ij}$ is of the form $\bfP^{(m)}_{\bfu}$ for some $\bfu\in\bbZ_2^m$.
Applying Theorem~\ref{thm:Laplace} specialized to characteristic two (or \eqref{eq:laplace-char2}) to the $k=N$ blocks gives
\[
\det(\bfA)
=\det\!\Bigg(\sum_{\pi\in S_N}\;\prod_{i=1}^N \gamma_{i,\pi(i)}\,A_{i,\pi(i)}\Bigg).
\]
For each $\pi\in S_N$, Proposition~\ref{prop:perm} implies
$\prod_{i=1}^N A_{i,\pi(i)}=\bfP^{(m)}_{u(\pi)}$ with
$u(\pi)=\sum_{i=1}^N u_{i,\pi(i)}\in\bbZ_2^m$, so
\[
\det(\bfA)
=\det\!\Bigg(\sum_{\pi\in S_N}\alpha_\pi\,\bfP^{(m)}_{u(\pi)}\Bigg),
\text{ where }
\alpha_\pi\triangleq \prod_{i=1}^N \gamma_{i,\pi(i)}.
\]
By Lemma~\ref{thm:perm_determinant}, the determinant of a linear combination of the permutation matrices $\bfP^{(m)}_\bfu$ depends only on the sum of the coefficients. Specifically, we have that 
\[
\det(\bfA)=\Big(\sum_{\pi\in S_N}\alpha_\pi\Big)^{M}.
\]
Finally, $\sum_{\pi\in S_N}\alpha_\pi=\det\!\big[\gamma_{ij}\big]_{i,j=1}^N$ and the Cauchy determinant evaluation~\cite{cauchy1841exercices} gives $\det[\gamma_{ij}]=\Delta$ 
%\hanmao{We need to cite Cauchy again}.
Therefore $\det(\bfA)=\Delta^{M}$.
\end{proof}

\end{document}